\documentclass[12pt]{article}

\usepackage[T1]{fontenc}
\usepackage[utf8]{inputenc}
\usepackage[a4paper,margin=1in]{geometry}
\usepackage{amsmath,amssymb,amsthm,mathtools}
\usepackage{microtype}
\usepackage{graphicx}
\usepackage{booktabs}
\usepackage{cite}
\usepackage{tikz}
\usetikzlibrary{arrows.meta,calc,positioning}
\usepackage[colorlinks=true,linkcolor=blue,citecolor=blue,urlcolor=blue]{hyperref}

\theoremstyle{plain}
\newtheorem{theorem}{Theorem}[section]
\newtheorem{proposition}[theorem]{Proposition}

\newtheorem{corollary}[theorem]{Corollary}

\theoremstyle{definition}
\newtheorem{definition}[theorem]{Definition}
\newtheorem{example}[theorem]{Example}

\theoremstyle{remark}
\newtheorem{remark}[theorem]{Remark}

\title{A Boundary--Residue Incidence Coalgebra for Associahedral Scattering Forms}

\author{Ioannis P.\ ZOIS\\
Exeter College,\\
Turl Street, OX1 3DP, Oxford, UK\\
\texttt{i.zois@exeter.oxon.org}}

\date{}

\begin{document}

\maketitle

\begin{abstract}
We propose a boundary--residue incidence coalgebra associated with the
face poset of a positive geometry and apply it to associahedral scattering
forms.  The construction is motivated by the analogy between the
Connes--Kreimer coproduct on Feynman graphs and the recursive residue
structure of canonical forms on positive geometries.  For the Stasheff
associahedron $K_n$, whose faces are indexed by non-crossing dissections
of an $(n+1)$-gon, we prove that the incidence coproduct records all
intermediate nested planar factorisation channels of the corresponding
tree-level scalar amplitude.  The residue of the canonical form on a face
labelled by a dissection factorises as the exterior product of canonical
forms on the lower associahedra associated with the resulting subpolygons.
The construction is illustrated explicitly for the pentagon associahedron
$K_4$, corresponding to the five-point planar scalar amplitude.  We then
formulate a loop-level extension: whenever a planar loop integrand is
represented by a positive geometry, the associahedral face poset is replaced
by the boundary poset of the corresponding loop geometry.  The one-loop
halohedron provides a concrete scalar example, while in the non-planar case
we define the associated incidence coalgebra at the level of logarithmic
singularity strata.  This gives a precise coalgebraic organisation of
boundary factorisation in positive geometry, parallel in spirit to the
Connes--Kreimer organisation of nested subdivergences in perturbative quantum
field theory.  We further compare the boundary--residue coalgebra with the
cellular incidence coalgebra of a triangulated or regular CW spacetime.  The
face poset of a finite regular CW complex reconstructs its barycentric
subdivision, and hence its underlying polyhedron, while in positive geometry
the same incidence mechanism organises canonical-form residues.  This yields a
precise incidence-first bridge between cellular spacetime topology and
positive-geometric amplitude factorisation, without assuming that metric or
causal data are determined by topology alone.
\end{abstract}

\section{Introduction}
\label{sec:introduction}

Perturbative quantum field theory traditionally organises scattering
amplitudes as sums of Feynman diagrams \cite{Feynman1949,tHooftVeltman1972,PeskinSchroeder1995,Weinberg1995,Dixon1996,ElvangHuang2015}.  This representation is extremely
successful, but it is not unique: individual diagrams depend on auxiliary
choices and are not themselves physical observables.  In many examples,
simple final amplitudes arise only after cancellations among many diagrammatic
terms.  This has motivated a search for formulations in which locality,
factorisation and unitarity are encoded more directly in the mathematical
objects from which amplitudes are extracted.

Several programmes in mathematical physics use non-diagrammatic or
non-standard mathematical structures to organise field-theoretic and
geometric data, from twistor methods to approaches to quantum gravity
\cite{Penrose1965,HawkingEllis1973,Polchinski1998,PenroseRindler1984,PenroseRindler1986,Rovelli2004,AshtekarLewandowski2004}.  In the present paper this broad context is used only as background.  The technical object studied here is much narrower: a precise incidence-coalgebraic structure inside the boundary and residue calculus of associahedral positive geometries.

The modern amplitudes programme provides one such formulation.  It is part of a broader development that includes twistor-string methods, on-shell recursion, colour--kinematics duality, the positive Grassmannian and the scattering-equation formalism \cite{Witten2004,BrittoCachazoFengWitten2005,BernCarrascoJohansson2008,CachazoHeYuan2014,ArkaniHamedPositiveGrassmannian2016}.  In a number
of theories, scattering amplitudes can be described by canonical differential
forms on positive geometries.  Two central examples are the amplituhedron,
which appears in planar maximally supersymmetric Yang--Mills theory, and the
associahedron, which appears in the positive geometry of kinematic space for
planar scalar theories \cite{ArkaniHamedTrnka2014,ArkaniHamedBaiLam2017,ArkaniHamedBaiHeYan2018,Mizera2017}.
In these settings, physical poles correspond to boundary components of the
geometry, and factorisation is implemented by residues of the canonical form.
Thus the boundary stratification of the positive geometry carries physical
information that is usually distributed across many Feynman diagrams.

A parallel algebraic lesson comes from the Connes--Kreimer approach to
renormalisation.  There, the combinatorics of Feynman graphs is organised by
a Hopf algebra whose coproduct records nested subdivergences and their
contractions \cite{Kreimer1998,ConnesKreimer1998,ConnesKreimer2000,ConnesKreimer2001,BlochKreimer2008}.  Although the associahedral amplitudes
considered in this article are tree-level objects and do not involve
renormalisation, they do possess a recursive boundary structure: faces of the
associahedron correspond to compatible factorisation channels, and iterated
residues of the canonical form correspond to nested factorisations.

The combinatorial background is equally important.  The associahedron was
introduced by Stasheff in the study of homotopy associativity, and is closely
related to Tamari orders, planar binary trees, operads, graph associahedra and
realizations of generalized associahedra \cite{Tamari1951,Stasheff1963,Lee1989,Loday2004,LodayRonco1998,Devadoss1999,ChapotonFominZelevinsky2002,CarrDevadoss2006,LodayVallette2012,Zois2005}.  Incidence coalgebras and Hopf algebras provide the natural algebraic language for intervals in posets and nested combinatorial structures \cite{Rota1964,JoniRota1979,Schmitt1994,AguiarMahajan2010,GrinbergReiner2014}.  We use this language to connect the combinatorics of associahedra with the residue recursion of canonical forms.

The main construction of this article is a boundary--residue incidence
coalgebra associated with the finite face poset of a positive geometry.  For the Stasheff associahedron $K_n$, whose faces are
indexed by non-crossing dissections of a polygon, we prove that this incidence
coproduct records all intermediate nested planar factorisation channels of the
corresponding tree-level scalar amplitude.  More explicitly, if a face is
labelled by a set of non-crossing diagonals, then the coproduct on the interval
from that face to the full associahedron sums over all intermediate
sub-dissections.  On canonical forms this is reflected by the factorisation of
residues into exterior products of the canonical forms of lower-dimensional
associahedra.

The construction is deliberately precise in scope.  It is not proposed as a
renormalisation Hopf algebra and, at tree level, it does not by itself compute
new amplitudes.  Rather, it isolates a coalgebraic structure that organises
the passage from boundary geometry to amplitude factorisation.  The first
non-trivial case, the pentagon associahedron $K_4$, is worked out explicitly
after the general statement for $K_n$.  The final part of the paper then
extends the same incidence-coalgebraic mechanism to loop positive geometries
when they are known, and formulates the non-planar obstruction in terms of a
logarithmic-singularity incidence coalgebra.

A further purpose of the paper is to make explicit a topological analogue of
this construction.  A triangulated or regular CW spacetime has a combinatorial
skeleton: cells, incidence relations, attaching data and skeleta.  In
algebraic topology, the face poset of a finite regular CW complex reconstructs
its barycentric subdivision, while orientation data determine the cellular
boundary operator.  Thus both the positive-geometric amplitude side and the
cellular-spacetime side pass through the same kind of incidence structure.
The paper formulates this as an incidence-first reconstruction principle:
continuum topological space and amplitude factorisation can both be organised
from combinatorial boundary data, although metric and causal structures require
additional labels.

\section{Polytopes and Associahedra: Basic Definitions}

\subsection{Polytopes}

We recall only the standard terminology needed below.  A convex polytope
can be specified either by its vertices or by a finite system of linear
inequalities.  These are the $V$- and $H$-representations of a polytope;
their equivalence is the Minkowski--Weyl theorem
\cite{Grunbaum2003,Ziegler1995}.\\

{\bf Definition 1.}
\emph{Convex hull description.}
A polytope is the convex hull of a finite set of points
$V=\{v_1,\ldots,v_k\}\subset\mathbb{R}^d$:
\[
P=\mathrm{conv}(V)
=\left\{\sum_{i=1}^k \lambda_i v_i \,\middle|\,
\lambda_i\ge 0,\; \sum_{i=1}^k \lambda_i=1\right\}.
\]

{\bf Definition 1'.} \emph{Half--space description.}
A polytope is a bounded intersection of finitely many half--spaces:
\[
P=\{x\in\mathbb{R}^d\mid A x \le b\},
\]
where $A$ is a real $m\times d$ matrix and $b$ is a real $m$-vector.\\

The combinatorial data used below are the faces of a polytope and their
incidence relations.  Vertices are $0$-faces, edges are $1$-faces, facets
are codimension-one faces, and inclusions of faces define the face poset.

\subsection{The Associahedron}

\subsubsection{Combinatorial motivation}

The associahedron encodes the combinatorics of full parenthesizations of
an ordered product, or equivalently planar binary trees \cite{Tamari1951,Stasheff1963,LodayRonco1998}.  Two vertices are
adjacent when the corresponding parenthesizations differ by one elementary
associativity move, for example
\[
((a\cdot b)\cdot c)\longmapsto (a\cdot (b\cdot c)).
\]
With the convention used here, the associahedron for $n$ input factors is
an $(n-2)$-dimensional polytope, and its number of vertices is the
corresponding Catalan number.

\begin{table}[htbp]
\centering
\begin{tabular}{c c c l}
\hline
$n$ (elements) & Parenthesizations & Dimension & Shape \\
\hline
2 & 1  & $0$ & Point \\
3 & 2  & $1$ & Line segment \\
4 & 5  & $2$ & Pentagon \\
5 & 14 & $3$ & Three-dimensional associahedron \\
6 & 42 & $4$ & Four-dimensional associahedron \\
\hline
\end{tabular}
\caption{Associahedra: number of parenthesizations, dimension, and shape.}
\label{tab:associahedra-catalan}
\end{table}

\subsubsection{Associahedra: definition, examples, and realisations}
The precise
mathematical definition is the following:\\

\begin{definition}[Associahedron]\label{def:associahedron}
The \emph{associahedron} $K_n$, also called the \emph{Stasheff polytope} \cite{Stasheff1963,Lee1989,Loday2004,ChapotonFominZelevinsky2002}, is a convex
polytope defined as follows. Let $n$ be the number of inputs (or leaves
in a tree). Then $K_n$ is a convex polytope of dimension $n-2$, whose:
\begin{itemize}
\item vertices correspond to the distinct full binary trees with $n$
leaves, or equivalently, the number of fully parenthesized products of
$n$ elements,
\item edges correspond to single associativity moves (tree rotations),
\item faces correspond to partial bracketings or degenerate trees.
\end{itemize}

Formally, let $T_n$ be the set of planar binary trees with $n$ leaves.
Then $K_n$ is a polytope with one-to-one correspondence between its:
\begin{itemize}
\item vertices and $T_n$,
\item $1$-skeleton (graph structure) and associativity relations
(rotation graph),
\item faces and combinatorial types of trees with missing internal
edges.
\end{itemize}
\end{definition}

\paragraph{Geometric realisation.}
There are several ways to realise the associahedron geometrically in
$\mathbb{R}^{n-2}$:
\begin{itemize}
\item Stasheff's construction via the poset of parenthesis expressions
(see \cite{Postnikov2009,Lam2016}).
\item Loday's realisation using tree coordinates \cite{Loday2004,LodayRonco1998}.
\item Secondary polytopes from convex hulls of certain points in space
(Gelfand--Kapranov--Zelevinsky construction, see \cite{GelfandKapranovZelevinsky1994}).
\end{itemize}

\paragraph{Some concrete examples.}
With the convention used here, $K_n$ corresponds to full parenthesizations
of a product of $n$ factors and has dimension $n-2$.
\begin{itemize}
\item $n=2$: $0$-dimensional associahedron. There is only one product
$a_1a_2$, so $K_2$ is a point.
\item $n=3$: $1$-dimensional associahedron. There are two full
parenthesizations,
\[
(a_1a_2)a_3,\qquad a_1(a_2a_3),
\]
so $K_3$ is a line segment.
\item $n=4$: $2$-dimensional associahedron. There are five full
parenthesizations of $a_1a_2a_3a_4$:
\[
((a_1a_2)a_3)a_4,\quad
(a_1(a_2a_3))a_4,\quad
a_1((a_2a_3)a_4),\quad
a_1(a_2(a_3a_4)),\quad
(a_1a_2)(a_3a_4),
\]
so $K_4$ is a pentagon.
\item $n=5$: $3$-dimensional associahedron. There are $14$ full
parenthesizations of $a_1a_2a_3a_4a_5$, so $K_5$ is a three-dimensional
polytope with $14$ vertices.
\end{itemize}

One standard geometric model of the associahedron uses triangulations of a
convex polygon.

Consider a convex $(n+1)$--gon. A triangulation of this polygon is a
decomposition into triangles by drawing non--intersecting diagonals.
Each triangulation uses exactly $n-2$ diagonals and there are
$C_{n-1}$ such triangulations, where $C_n$ denotes the $n$--th Catalan
number.

There is a natural correspondence between triangulations of the
$(n+1)$--gon and the vertices of the associahedron $K_n$. Each diagonal
corresponds to a partial parenthesization and each complete
triangulation corresponds to a full parenthesization.

Two triangulations are related by a single diagonal flip if and only if
the corresponding vertices of the associahedron are connected by an
edge.

This polygon picture provides a concrete geometric model for the
associahedron and makes its combinatorial structure transparent.

Associahedra occur in several areas of mathematics and mathematical physics:
\begin{itemize}
\item Algebraic topology: models associativity in loop spaces.
\item Category theory: higher associativity and operads (see \cite{Stasheff1963,Loday2004}).
\item Mathematical physics: scattering amplitudes and positive geometries.
\end{itemize}

\section{Associahedra, Scattering Amplitudes and Feynman Graphs}

We shall focus on the third application: scattering amplitudes in quantum
field theory (see \cite{Devadoss1999,Devadoss2001,GelfandKapranovZelevinsky1994,Brown2009}). The
aim is to reformulate the computation of scattering amplitudes --- how
particles interact and scatter after collisions --- in a way that is
geometrically natural, combinatorially clean, and avoids the
redundancies of Feynman diagrams.
The appearance of associahedra in scattering amplitudes is clearest in
planar scalar $\phi^3$ theory.

Traditionally, scattering amplitudes are computed by summing over
Feynman diagrams. For an $n$--particle tree--level process in $\phi^3$
theory, each planar Feynman diagram corresponds precisely to a planar
binary tree, and hence to a vertex of the associahedron.

Each propagator in a Feynman diagram corresponds to a diagonal in the
polygon picture and therefore to a facet of the associahedron.

This observation leads to a geometric representation of the colour-ordered
tree amplitude as the canonical differential form associated with the
kinematic associahedron.

In this formulation, locality and unitarity arise from the boundary
structure of the polytope itself.
In scalar field theories, especially bi--adjoint scalar theory, it was
discovered that the kinematic data of particle scattering can be
embedded into a space where the amplitudes live on the boundaries of a
polytope --- the associahedron. In greater detail, what has been shown is
that scattering amplitudes (in certain quantum field theories) can be
calculated as volumes of associahedra in a certain \emph{kinematic
space}. The boundaries of the associahedron correspond to factorization
channels in the scattering process, when a process splits into
sub--processes.

For the class of amplitudes considered here, this gives a formulation in
which the final result is extracted from the geometry of the associahedron
rather than from an explicit summation over planar cubic diagrams.

The relevant dictionary is the following:
\begin{itemize}
\item vertices represent distinct orderings and ways particles can
interact (parentheses or diagrams),
\item edges represent possible transitions (factorization limits),
\item facets correspond to physical poles in the amplitude (when
particles become collinear or on--shell).
\end{itemize}

The advantage of this description is that the pole and factorisation
structure of the amplitude is controlled by the boundary structure of a
convex polytope.  In what follows we use this boundary structure, rather
than computational efficiency, as the basis for the coalgebraic construction.

We now examine this construction in greater detail. We restrict attention
first to the tree-level planar sector of massless scalar $\phi^3$ theory.
Let $p_1,\ldots,p_n$ be massless external momenta in four-dimensional
Minkowski space, satisfying
\begin{align}
\sum_{i=1}^n p_i &= 0 , \label{eq:momentum-conservation} \\
p_i^2 &= 0 , \qquad i=1,\ldots,n . \label{eq:on-shell}
\end{align}
For a fixed cyclic ordering $(1,2,\ldots,n)$, define the planar kinematic
variables
\[
X_{ij}=(p_i+p_{i+1}+\cdots+p_{j-1})^2,
\qquad 1\leq i<j\leq n,
\]
where the indices are understood cyclically and the adjacent cases are
excluded. These variables are attached to diagonals of an $n$-gon; hence
there are $n(n-3)/2$ planar variables.

\begin{definition}[Planar kinematic space]
The planar kinematic space $\mathcal{K}_n^{\mathrm{pl}}$ is the real vector
space with coordinates $X_{ij}$ corresponding to the diagonals of a
fixed cyclically ordered $n$-gon. Thus
\[
\dim \mathcal{K}_n^{\mathrm{pl}}=\frac{n(n-3)}{2}.
\]
\end{definition}

\begin{definition}[Kinematic associahedron]
The kinematic associahedron is obtained by intersecting the positive
region
\[
X_{ij}\geq 0
\]
with a suitable affine subspace of $\mathcal{K}_n^{\mathrm{pl}}$. In the
standard construction of Arkani-Hamed, Bai, He and Yan, this affine
subspace is fixed by assigning positive constants to the non-planar
Mandelstam combinations. The resulting bounded polytope is combinatorially
the associahedron: its facets are the physical poles $X_{ij}=0$, and its
vertices correspond to planar cubic Feynman diagrams.
\end{definition}

\begin{definition}[Canonical form]
Let $P$ be a positive geometry, for example a convex polytope. Its
canonical form $\Omega(P)$ is the unique meromorphic top-degree form with
logarithmic singularities on the boundary hypersurfaces and residues equal
to the canonical forms of the boundary geometries. For the kinematic
associahedron one writes schematically
\[
\Omega(A_n)=m_n\, d^{n-3}X,
\]
where the rational function $m_n$ is the corresponding colour-ordered
tree-level scalar amplitude.
\end{definition}

As explicit examples, we shall compute the four- and five-particle
scattering amplitudes in cubic scalar theory both via Feynman graphs and
via the associahedron method, and verify that the two approaches coincide.

\subsection[Four-Particle Scattering Amplitude]{Four--Particle Scattering Amplitude ($\phi^3$ theory)}

For the full non-colour-ordered four-point tree amplitude in cubic scalar
theory, the usual Feynman-diagram computation gives three channels:
\[
A_4^{\mathrm{full}}=\frac{1}{s}+\frac{1}{t}+\frac{1}{u},
\]
where, for massless external momenta,
\[
s=(p_1+p_2)^2,\qquad t=(p_1+p_3)^2,\qquad u=(p_1+p_4)^2 .
\]

For a fixed planar ordering, however, only the planar channels appear. For
example, with cyclic ordering $(1,2,3,4)$ the two planar variables may be
chosen as
\[
X_{13}=s_{12}=(p_1+p_2)^2,\qquad
X_{24}=s_{23}=(p_2+p_3)^2.
\]
The four-point associahedron is a line segment. Its two endpoints are the
two planar factorisation channels. On the affine line
\[
X_{13}+X_{24}=C,\qquad C>0,
\]
with $X_{13},X_{24}>0$, the canonical logarithmic form is
\[
\Omega_4
=d\log\left(\frac{X_{13}}{X_{24}}\right)
=\left(\frac{1}{X_{13}}+\frac{1}{X_{24}}\right)dX_{13}.
\]
Thus the colour-ordered planar scalar amplitude is
\[
m_4(1,2,3,4)=\frac{1}{X_{13}}+\frac{1}{X_{24}}.
\]
The full non-colour-ordered expression is recovered only after summing the
appropriate planar contributions over inequivalent cyclic orderings; it is
not a single one-dimensional associahedron.

\begin{figure}[htbp]
\centering
\begin{tikzpicture}[scale=1.1]
  \draw[very thick] (0,0)--(5,0);
  \fill (0,0) circle (2pt) node[below] {$X_{13}=0$};
  \fill (5,0) circle (2pt) node[below] {$X_{24}=0$};
  \node[above] at (2.5,0) {$K_3$ / four-point planar associahedron};
\end{tikzpicture}
\caption{The four-point planar associahedron is a line segment. Its endpoints
represent the two planar factorisation channels.}
\label{fig:four-point-associahedron}
\end{figure}

\subsection[Five-Particle Scattering Amplitude]{Five-Particle Scattering Amplitude ($\phi^3$ theory)}

For five massless particles $(p_1,p_2,p_3,p_4,p_5)$ with planar ordering
$(1,2,3,4,5)$, the associahedron is a two-dimensional pentagon. The five
planar variables corresponding to diagonals of the pentagon are
\begin{align*}
X_{13} &= s_{12}=(p_1+p_2)^2, \\
X_{14} &= s_{123}=(p_1+p_2+p_3)^2, \\
X_{24} &= s_{23}=(p_2+p_3)^2, \\
X_{25} &= s_{234}=(p_2+p_3+p_4)^2, \\
X_{35} &= s_{34}=(p_3+p_4)^2 .
\end{align*}

Each vertex of the pentagon corresponds to a triangulation of the pentagon,
or equivalently to a planar cubic tree. The compatible pairs of diagonals
are
\[
(13,14),\quad (14,24),\quad (24,25),\quad (25,35),\quad (13,35).
\]
Therefore the colour-ordered five-point tree amplitude is
\[
\begin{aligned}
m_5(1,2,3,4,5)
={}&\frac{1}{X_{13}X_{14}}
+\frac{1}{X_{14}X_{24}}
+\frac{1}{X_{24}X_{25}} \\
&+\frac{1}{X_{25}X_{35}}
+\frac{1}{X_{13}X_{35}} .
\end{aligned}
\]
This is the canonical rational function associated with the pentagon
associahedron. Its five boundary facets encode the five factorisation
channels of the planar amplitude.

\begin{figure}[htbp]
\centering
\begin{tikzpicture}[scale=1.15]
  \foreach \i in {1,...,5} {\coordinate (v\i) at ({90+72*(\i-1)}:2);}
  \draw[very thick] (v1)--(v2)--(v3)--(v4)--(v5)--cycle;
  \foreach \i/\lab in {1/{$X_{13}$},2/{$X_{14}$},3/{$X_{24}$},4/{$X_{25}$},5/{$X_{35}$}} {
    \fill (v\i) circle (2pt);
    \node at ({90+72*(\i-1)}:2.55) {\lab};
  }
  \node at (0,0) {$K_4$};
\end{tikzpicture}
\caption{The five-point planar associahedron is a pentagon. Vertices correspond
to triangulations, or equivalently to planar cubic trees.}
\label{fig:five-point-associahedron}
\end{figure}

\section{Amplituhedra and Positive Geometries}

\begin{definition}[Amplituhedron]
Let $k,n,m$ be integers with $k+m\leq n$. Let
\[
Z=(Z_1,\ldots,Z_n)
\]
be a totally positive configuration of $n$ vectors in $\mathbb{R}^{k+m}$;
equivalently, after arranging the $Z_i$ as the rows of an
$n\times(k+m)$ matrix, all ordered maximal minors are positive. Let
$G_+(k,n)\subset \mathrm{Gr}(k,n)$ denote the positive Grassmannian.
For $C\in G_+(k,n)$, set
\[
Y=CZ\in \mathrm{Mat}_{k\times(k+m)}(\mathbb{R}),
\]
and regard the row span of $Y$ as a point of $\mathrm{Gr}(k,k+m)$. The
amplituhedron is the image
\[
\mathcal{A}_{n,k,m}(Z)
=\{\,Y=CZ\in \mathrm{Gr}(k,k+m)\mid C\in G_+(k,n)\,\}.
\]
\end{definition}

\begin{definition}[Positive geometry]
Following the formalism of positive geometries and canonical forms \cite{ArkaniHamedBaiLam2017,Postnikov2006,SpeyerWilliams2005,Lam2016,Scott2006}, a positive geometry is, roughly, a pair $(X,X_{\geq0})$, where $X$ is a
complex projective variety and $X_{\geq0}\subset X(\mathbb{R})$ is an
oriented real semi-algebraic subset with boundary, together with a unique
canonical differential form $\Omega(X,X_{\geq0})$ having logarithmic
singularities on all boundary components and no other singularities.
Its residues on boundary components are the canonical forms of the
corresponding boundary geometries.
\end{definition}

This canonical form encodes geometric and physical information. In the
amplituhedron programme, scattering amplitudes in planar
$\mathcal{N}=4$ supersymmetric Yang--Mills theory are obtained from the
canonical form of the amplituhedron. For $k=1$ the image is closely related
to a cyclic polytope. The associahedron is a different positive geometry,
most directly associated with planar cubic scalar amplitudes in kinematic
space.

As a concrete example related to positive geometry, recall the
Parke--Taylor formula. The Parke--Taylor formula gives a compact expression
for colour-ordered tree-level maximally helicity violating (MHV) gluon
amplitudes in Yang--Mills theory \cite{ParkeTaylor1986,Dixon1996,ElvangHuang2015}. These
are amplitudes for which exactly two gluons have negative helicity and all
remaining gluons have positive helicity.

Let the external gluons be labelled $1,2,\ldots,n$, and let gluons $i$ and
$j$ be the only negative-helicity particles. The colour-ordered partial
amplitude is
\[
A_n^{\mathrm{tree}}
(1^+,\ldots,i^-,\ldots,j^-,\ldots,n^+)
=
\frac{\langle ij\rangle^4}
{\langle12\rangle\langle23\rangle\cdots\langle n1\rangle} .
\]
Here
\[
\langle ab\rangle
=
\varepsilon_{\alpha\beta}\lambda_a^{\alpha}\lambda_b^{\beta}
\]
denotes the usual spinor-helicity bracket for massless momenta. For example,
\[
A^{\mathrm{MHV}}_5(1^-,2^-,3^+,4^+,5^+)
=
\frac{\langle12\rangle^4}
{\langle12\rangle\langle23\rangle\langle34\rangle
\langle45\rangle\langle51\rangle}.
\]

It is important not to conflate two related but distinct statements. The
factor
\[
\mathrm{PT}(1,2,\ldots,n)=
\frac{1}{\langle12\rangle\langle23\rangle\cdots\langle n1\rangle}
\]
is the Parke--Taylor denominator appearing in the gluon MHV amplitude. On
the other hand, the positive moduli space $\mathcal{M}_{0,n}^{+}$ carries
canonical logarithmic forms that are closely related to Parke--Taylor forms
and to the CHY/twistor-string representations of amplitudes \cite{Witten2004,CachazoHeYuan2014,CachazoHeYuan2014II,Mizera2017}. Thus positive
geometry explains the structure of these forms, but the full Yang--Mills
MHV amplitude also includes the helicity numerator $\langle ij\rangle^4$
and the corresponding momentum-conservation data.

For $n=5$, after fixing three points by $\mathrm{SL}(2)$ invariance, one may
use coordinates $0<t_2<t_3<1$ on a positive cell of $\mathcal{M}_{0,5}^+$.
A representative logarithmic form is
\[
\omega_5
=
\frac{dt_2\wedge dt_3}
{t_2(1-t_2)t_3(1-t_3)(t_3-t_2)}.
\]
This form should be understood as a positive-geometric ingredient in the
amplitudes programme, rather than as a direct elementary derivation of the
spinor-helicity Parke--Taylor formula by itself.

\section{Boundary Factorisation from Positive Geometry}
\label{sec:boundary-factorisation}

In this section we isolate the technical contrast that will be used in the
coalgebraic construction.  No foundational claim is needed here.  The narrower
point is that, for the amplitudes under consideration, some information normally
represented by Feynman diagrams is encoded instead in the boundary and residue
structure of a positive geometry.

In a diagrammatic formulation, a tree amplitude is assembled from local
vertices and propagators.  For planar cubic scalar theory these propagators
are labelled by planar channels, and the allowed compatible collections of
channels are precisely the non-crossing dissections of a polygon.  The
associahedron packages these compatible collections into a single positive
geometry.  Its facets correspond to physical poles, while residues on these
facets give the lower-point canonical forms associated with factorisation.

The same point can be expressed algebraically.  The canonical form of a
positive geometry satisfies a recursive boundary condition: the residue on a
boundary component is the canonical form of that boundary component.  For the
associahedron, this recursion is the recursion of planar factorisation.  Thus,
at tree level, locality is not separately imposed at the level of individual
diagrams; it is encoded by the face structure and residue structure of the
positive geometry.

The operational contrast relevant for this paper is summarised in
Table~\ref{tab:geometry-qft}.  The table should be read as a comparison of
organising principles, not as an assertion that the two formalisms are
identical in scope.

\begin{table}[htbp]
\centering
\begin{tabular}{l l}
\hline
Diagrammatic organisation & Positive-geometric organisation \\
\hline
Planar Feynman diagrams & Vertices and faces of the associahedron \\
Propagator channels & Facets / boundary hypersurfaces \\
Factorisation limits & Residues of the canonical form \\
Nested compatible channels & Chains in the face poset \\
Diagrammatic recursion & Boundary--residue recursion \\
\hline
\end{tabular}
\caption{Operational contrast between the diagrammatic and
positive-geometric organisations of tree-level planar amplitudes.}
\label{tab:geometry-qft}
\end{table}

For the amplitudes considered here, the relevant structural data are encoded
in a positive geometry together with its canonical form.  The pole and
residue structure of that form records the same factorisation information
that is represented diagrammatically by propagator channels.  The
boundary--residue incidence coalgebra introduced below records this recursive
structure at the level of the face poset.

\section{A Boundary--Residue Coalgebra for Positive Geometries}
\label{sec:boundary-residue-coalgebra}

The analogy with the Connes--Kreimer Hopf algebra suggests that the
boundary stratification of a positive geometry should itself carry an
algebraic structure.  We now make this idea precise in the simplest
setting, namely for positive geometries whose boundary strata form a
finite face poset, such as convex polytopes and, in particular, the
associahedron.  The construction is an incidence coalgebra in the sense
of Rota's theory of incidence algebras and posets, adapted to the
boundary strata of positive geometries \cite{Rota1964,JoniRota1979,Schmitt1994,StanleyEC1,StanleyEC2}.

Let $P$ be a positive geometry with finite face poset $\mathcal F(P)$,
ordered by inclusion. Thus $F\leq G$ means that $F$ is a face of $G$.
For $F\leq G$, denote by $[F,G]$ the corresponding interval in the face
poset.

\begin{definition}[Boundary--residue incidence coalgebra]
Let $\mathcal C(P)$ be the vector space over $\mathbb Q$ generated by all
intervals $[F,G]$, with $F\leq G$ in $\mathcal F(P)$. Define a coproduct
\[
   \Delta [F,G]
   =
   \sum_{F\leq H\leq G} [F,H]\otimes [H,G],
\]
and a counit
\[
   \varepsilon([F,G]) =
   \begin{cases}
   1, & F=G,\\
   0, & F<G.
   \end{cases}
\]
We call $\mathcal C(P)$ the boundary--residue incidence coalgebra of $P$.
\end{definition}

\begin{proposition}
The coproduct $\Delta$ is coassociative.
\end{proposition}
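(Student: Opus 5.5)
The plan is a direct computation on basis elements, followed by extension by linearity. Since $\mathcal C(P)$ is spanned by the intervals $[F,G]$ with $F\leq G$, and $\Delta$ is defined on this basis and extended $\mathbb Q$-linearly, it suffices to check
\[
(\Delta\otimes\mathrm{id})\Delta[F,G]=(\mathrm{id}\otimes\Delta)\Delta[F,G]
\]
for every interval. Because the face poset $\mathcal F(P)$ is finite, every sum involved is finite, so the identity makes sense in $\mathcal C(P)\otimes\mathcal C(P)\otimes\mathcal C(P)$.

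First, the left-hand side. Applying $\Delta$ to the first tensor factor of $\sum_{F\leq H\leq G}[F,H]\otimes[H,G]$ gives
\[
\sum_{F\leq H\leq G}\ \sum_{F\leq H'\leq H}[F,H']\otimes[H',H]\otimes[H,G].
\]
Next, the right-hand side. Applying $\Delta$ to the second factor gives
\[
\sum_{F\leq H\leq G}\ \sum_{H\leq H''\leq G}[F,H]\otimes[H,H'']\otimes[H'',G].
\]

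Both double sums are indexed by the same set: pairs $(K_1,K_2)$ of faces with $F\leq K_1\leq K_2\leq G$, that is, $2$-step multichains in the interval $[F,G]$. In the first sum the pair is $(K_1,K_2)=(H',H)$, and in the second it is $(H,H'')$. In both cases the summand is $[F,K_1]\otimes[K_1,K_2]\otimes[K_2,G]$. Reindexing therefore identifies the two expressions term by term.

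The only point requiring care is that this reindexing is a bijection. It is, because transitivity of the face order (a face of a face is a face) gives
\[
\{F\leq H'\leq H\leq G\}=\{F\leq H\leq H''\leq G\}
\]
as the same set of chains. Every term is a genuine basis element, since $F\leq K_1\leq K_2\leq G$ makes each bracket a valid interval. I do not expect a real obstacle here: the statement is Rota's standard coassociativity for incidence coalgebras, and the positive-geometry interpretation plays no role. For completeness I will also note the counit axiom. In $(\varepsilon\otimes\mathrm{id})\Delta[F,G]$ only the term with $H=F$ survives, giving $[F,G]$, and symmetrically for $(\mathrm{id}\otimes\varepsilon)\Delta[F,G]$. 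Hence $(\mathcal C(P),\Delta,\varepsilon)$ is a counital coassociative coalgebra.
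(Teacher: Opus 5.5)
Your proof is correct and is essentially the paper's argument: both expand $(\Delta\otimes\mathrm{id})\Delta[F,G]$ and $(\mathrm{id}\otimes\Delta)\Delta[F,G]$ into the same sum $\sum_{F\leq H\leq K\leq G}[F,H]\otimes[H,K]\otimes[K,G]$ over two-step multichains of the interval. Your explicit reindexing via transitivity and your counit check are small additions that the paper leaves implicit.
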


\begin{proof}
For any interval $[F,G]$, we compute
\[
(\Delta\otimes \mathrm{id})\Delta[F,G]
 =
 \sum_{F\leq H\leq K\leq G}
 [F,H]\otimes[H,K]\otimes[K,G],
\]
whereas
\[
(\mathrm{id}\otimes \Delta)\Delta[F,G]
 =
 \sum_{F\leq H\leq K\leq G}
 [F,H]\otimes[H,K]\otimes[K,G].
\]
The two expressions are identical. Hence $\Delta$ is coassociative.
\end{proof}

The relevance of this coalgebra to positive geometry comes from the
recursive definition of canonical forms. If $\Omega_P$ denotes the
canonical form of $P$, then for a boundary face $F\subset P$ one has
\[
   \operatorname{Res}_{F}\Omega_P = \Omega_F,
\]
up to the conventional orientation sign. More generally, for a chain of
faces
\[
   F_0\leq F_1\leq \cdots \leq F_r=P,
\]
iterated residues satisfy
\[
   \operatorname{Res}_{F_0}
   \operatorname{Res}_{F_1}
   \cdots
   \operatorname{Res}_{F_{r-1}}
   \Omega_P
   =
   \Omega_{F_0},
\]
again up to orientation. This is the residue recursion built into the
definition of positive geometries and canonical forms
\cite{ArkaniHamedBaiLam2017}.

\begin{proposition}[Compatibility of residues with the boundary coalgebra]
Let $P$ be a positive geometry with finite face poset. The coproduct on
$\mathcal C(P)$ records all possible intermediate boundary factorisations
of the residue operation. In particular, the terms $[F,H]\otimes[H,P]$
appearing in
\[
   \Delta[F,P]
   =
   \sum_{F\leq H\leq P}
   [F,H]\otimes[H,P]
\]
are in one-to-one correspondence with the possible ways of taking the
residue of $\Omega_P$ first along an intermediate face $H$, and then
along $F\subset H$.
\end{proposition}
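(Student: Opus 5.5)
The plan is to make the claimed one-to-one correspondence explicit as a bijection between two finite index sets. The first set is the set of terms of $\Delta[F,P]$. By the definition of the coproduct, this is the set of faces $H\in\mathcal F(P)$ with $F\leq H\leq P$, i.e.\ the elements of the interval $[F,P]$.

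The second set is the set of two-step residue procedures ending at $F$. Each is specified by a face $H$ with $F\leq H\leq P$: first take $\operatorname{Res}_H$ of $\Omega_P$, which produces $\Omega_H$, and then take the residue along $F$, regarded as a face of the positive geometry $H$. The map sends the term $[F,H]\otimes[H,P]$ to the procedure ``first $H$, then $F$''. Its inverse reads off the intermediate face. Since both sets are indexed by the same faces $H$, the map is visibly a bijection.

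Next I will check that every such procedure is well defined and lands on $\Omega_F$. I will use the residue recursion recalled just before the statement. For $H\leq P$ one has $\operatorname{Res}_H\Omega_P=\Omega_H$, up to orientation. The face poset of $H$ is the lower interval $\{K\in\mathcal F(P): K\leq H\}$, because a face of a face of a polytope is a face of the polytope. Hence $F$ is a face of $H$, and applying the recursion inside $H$ gives $\operatorname{Res}_F\Omega_H=\Omega_F$. Thus each term of $\Delta[F,P]$ corresponds to a factorisation $\Omega_P\mapsto\Omega_H\mapsto\Omega_F$ whose output is independent of $H$. The left tensor factor $[F,H]$ records the residue step inside $H$, and the right factor $[H,P]$ records the step from $P$ to $H$.

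Finally I will treat the degenerate terms $H=F$ and $H=P$. In these cases one of the two steps is the identity, so one tensor factor is a trivial interval with $\varepsilon=1$. These terms are exactly the one-step residue $\operatorname{Res}_F\Omega_P$, and this matches the counit axiom. For ``all possible intermediate boundary factorisations'' with longer chains, I will iterate $\Delta$. Coassociativity (the preceding Proposition) identifies the terms of $\Delta^{(r)}[F,P]$ with multichains $F\leq H_1\leq\cdots\leq H_{r-1}\leq P$. The iterated-residue formula then matches these multichains with ordered sequences of residues.

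The main obstacle is that, for $\operatorname{codim}(F\subset H)>1$, ``residue along $F$'' is not a single residue but an iterated one. Iterated residues along different maximal chains agree only up to orientation sign. This is why the correspondence is asserted at the level of indexing, and equalities of forms hold only up to sign. I would state explicitly that the bijection is combinatorial. For codimension greater than one, I would interpret $\operatorname{Res}_F$ as any iterated residue along a flag in $[F,H]$, invoking the recalled identity $\operatorname{Res}_{F_0}\cdots\operatorname{Res}_{F_{r-1}}\Omega_P=\Omega_{F_0}$ to see that the result is independent of the flag up to sign.
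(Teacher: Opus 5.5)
Your proposal is correct and follows the same route as the paper's proof. The terms of $\Delta[F,P]$ are indexed by the intermediate faces $H\in[F,P]$, and the residue recursion $\operatorname{Res}_H\Omega_P=\Omega_H$, $\operatorname{Res}_F\Omega_H=\Omega_F$ attaches to each such $H$ the two-stage factorisation $\Omega_P\to\Omega_H\to\Omega_F$. Your extra points make explicit what the paper's short argument leaves implicit: the degenerate terms $H=F$ and $H=P$, reading a codimension $>1$ residue as an iterated residue along a flag (independent of the flag up to sign), and the fact that faces of $H$ are exactly the faces of $P$ below $H$ (for a non-polytopal positive geometry this must be taken as part of what its face poset means).
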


\begin{proof}
By the defining property of a positive geometry, taking a residue along a
boundary face $H\subset P$ gives the canonical form $\Omega_H$. Taking a
further residue along $F\subset H$ gives $\Omega_F$. Thus each
intermediate face $H$ determines a factorisation of the residue operation
\[
   \Omega_P \longrightarrow \Omega_H \longrightarrow \Omega_F.
\]
The coproduct above sums precisely over all such intermediate faces $H$
with $F\leq H\leq P$. Hence the incidence coproduct is the combinatorial
shadow of the recursive residue structure of canonical forms.
\end{proof}

This construction is deliberately parallel to the Connes--Kreimer
coproduct
\[
   \Delta(\Gamma)
   =
   \sum_{\gamma\subset \Gamma}
   \gamma\otimes \Gamma/\gamma,
\]
where the sum runs over divergent subgraphs of a Feynman graph $\Gamma$
\cite{Kreimer1998,ConnesKreimer1998,ConnesKreimer2000,ConnesKreimer2001}. In the Connes--Kreimer case, the coproduct
records the nested subdivergences of perturbative renormalisation. In the
positive geometry case, the boundary--residue incidence coproduct records the
nested factorisation channels of the canonical form.

For the associahedron this interpretation becomes particularly concrete.
Let $A_n$ denote the kinematic associahedron associated with an
$n$-point planar tree amplitude; with the convention used earlier, this
has the combinatorial type of the Stasheff polytope $K_{n-1}$. A face of
$A_n$ is specified by a set $D$ of mutually non-crossing diagonals of an
$n$-gon. Cutting the polygon along these diagonals decomposes it into
smaller polygons. Correspondingly, the face is combinatorially a product
of lower-dimensional associahedra:
\[
   F_D \simeq \prod_{\alpha} A_{n_\alpha},
\]
where the $n_\alpha$ are the numbers of external sides of the resulting
subpolygons, with the standard identifications along the internal
diagonals. Consequently, the residue of the canonical form on such a face
factorises as
\[
   \operatorname{Res}_{F_D}\Omega_{A_n}
   =
   \bigwedge_{\alpha}\Omega_{A_{n_\alpha}},
\]
up to orientation and the standard momentum-conservation constraints.
This is precisely the geometric form of tree-level factorisation: a pole
corresponding to a planar channel decomposes the amplitude into
lower-point amplitudes \cite{ArkaniHamedBaiHeYan2018,Mizera2017}.

\begin{corollary}
For the kinematic associahedron $A_n$, the boundary--residue incidence coalgebra
$\mathcal C(A_n)$ encodes the nested planar factorisation structure of
tree-level scalar amplitudes.
\end{corollary}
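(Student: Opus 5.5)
The plan is to make the corollary precise by identifying the face poset $\mathcal F(A_n)$ with a combinatorial poset of dissections, and then reading the incidence coproduct through that identification together with the residue factorisation stated just before the corollary.

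\textbf{Step 1: the poset anti-isomorphism.} Let $\mathcal D_n$ be the set of non-crossing sets $D$ of diagonals of the $n$-gon, ordered by reverse inclusion. I will use the standard face description of $A_n$ recalled above, in which $F_D$ is the face cut out by $X_{ij}=0$ for $(ij)\in D$. The claim to verify is that
\[
D\longmapsto F_D,\qquad F_D\le F_{D'} \iff D'\subseteq D,
\]
is an isomorphism $\mathcal D_n\cong\mathcal F(A_n)$ with the top element $F_\emptyset=A_n$. Since $A_n$ is a simple polytope whose facets are the hyperplanes $X_{ij}=0$, each face is the intersection of the facets containing it. The facets containing $F_D$ are exactly those indexed by $D$, because the non-crossing sets form the flag (clique) complex of compatibility. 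This yields the order relation in both directions.

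\textbf{Step 2: intervals and the coproduct.} With this identification, an interval $[F_D,A_n]$ corresponds to the Boolean lattice of subsets $D'\subseteq D$. Every subset of a non-crossing set is again non-crossing, so each such $D'$ gives a face. Hence
\[
\Delta[F_D,A_n]=\sum_{D'\subseteq D}[F_D,F_{D'}]\otimes[F_{D'},A_n].
\]
The sum therefore runs exactly over intermediate sub-dissections $D'$, that is, over the channels factorised first before the remaining ones in $D\setminus D'$. The same computation applies to general intervals $[F_D,F_{D'}]$. Using the product decomposition $F_{D'}\simeq\prod_\alpha A_{n_\alpha}$, the interval $[F_D,F_{D'}]$ is the product over subpolygons $\alpha$ of intervals in the face posets of the $A_{n_\alpha}$. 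It is cut out by the diagonals of $D\setminus D'$ that lie inside each subpolygon. This gives the recursive, nested character of the coproduct.

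\textbf{Step 3: matching with residues.} Apply the earlier Proposition on compatibility of residues with the boundary coalgebra along the chain $F_D\le F_{D'}\le A_n$. Combined with the factorisation
\[
\operatorname{Res}_{F_{D'}}\Omega_{A_n}=\bigwedge_\alpha\Omega_{A_{n_\alpha}},
\]
each term $[F_D,F_{D'}]\otimes[F_{D'},A_n]$ corresponds to the following two-stage process. First take residues on the poles $X_{ij}=0$, $(ij)\in D'$, which splits $m_n$ into lower-point amplitudes. Then take residues on the remaining poles inside the factors. Iterating $\Delta$ and using coassociativity (the first Proposition of this section), the iterated coproducts enumerate all chains $D=D_0\supseteq D_1\supseteq\cdots\supseteq\emptyset$, which are the nested orderings of factorisation. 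This is the precise meaning of ``encodes'' in the corollary.

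\textbf{Main obstacle.} The hard part is Step 1, specifically the claim that the faces of the kinematic realisation are exactly the $F_D$ with $D$ non-crossing and nothing more. Concretely, one must show that $X_{ij}=X_{kl}=0$ has empty intersection with the realisation when the diagonals $(ij)$ and $(kl)$ cross. I would try first the ABHY relation $X_{ik}+X_{jl}-X_{il}-X_{jk}=c_{ij\ldots}>0$ for crossing diagonals, which forces a contradiction with positivity. Failing that, I would fall back on the combinatorial-type statement in the Definition of the kinematic associahedron.
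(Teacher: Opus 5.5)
Your proposal is correct and follows essentially the same route as the paper, which derives the corollary from the residue-compatibility proposition, the dissection description of faces (reverse inclusion of diagonal sets), the resulting sum over sub-dissections $E\subseteq D$, and the product/residue factorisation, all made precise in Theorem~\ref{thm:Kn-boundary-residue}. Your ABHY positivity argument for crossing diagonals (relabelled as $(ik),(jl)$ with $i<j<k<l$, so that $X_{ik}+X_{jl}=X_{il}+X_{jk}+\sum c_{ab}>0$) usefully justifies the face-structure step that the paper only cites as standard, though the fact that $F_D$ lies in no facets beyond those indexed by $D$ comes from simplicity and codimension counting, not from the flag property you invoke.
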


\subsection[The general associahedron Kn]{The general associahedron $K_n$}
\label{subsec:general-Kn}

We now formulate the preceding statement in the standard combinatorial
language of the Stasheff associahedron.  In this subsection, $K_n$ denotes
the associahedron whose vertices are the triangulations of a convex
$(n+1)$-gon.  Thus $K_4$ is the pentagon, corresponding to the five-point
planar scalar amplitude in the convention used in this article.

Let $P_{n+1}$ be a convex polygon with vertices labelled cyclically by
$1,2,\ldots,n+1$.  A set $D$ of diagonals of $P_{n+1}$ will be called
admissible if its elements are pairwise non-crossing.  To every admissible
set $D$ there corresponds a face $F_D\subset K_n$.  The incidence order is
reversed with respect to inclusion of diagonals:
\[
   F_D\subseteq F_E \quad \Longleftrightarrow \quad E\subseteq D .
\]
In particular, facets correspond to single diagonals, while vertices
correspond to maximal admissible sets, i.e. triangulations.

\begin{theorem}[Boundary--residue coalgebra of $K_n$]
\label{thm:Kn-boundary-residue}
Let $K_n$ be the associahedron of a convex $(n+1)$-gon, and let
$D$ be an admissible set of non-crossing diagonals.  Then the interval
coproduct from the face $F_D$ to the full associahedron is
\[
   \Delta [F_D,K_n]
   =
   \sum_{E\subseteq D}
   [F_D,F_E]\otimes [F_E,K_n],
\]
where the sum runs over all sub-dissections $E$ of $D$.  Moreover, if the
cutting of $P_{n+1}$ along $D$ produces subpolygons $Q_1,\ldots,Q_r$, and
if $q_\alpha$ is the number of sides of $Q_\alpha$, then
\[
   F_D \simeq \prod_{\alpha=1}^{r} K_{q_\alpha-1}
\]
combinatorially, with the convention that the associahedron of a triangle
is a point.  Consequently the canonical form satisfies the residue
factorisation
\[
   \operatorname{Res}_{F_D}\Omega_{K_n}
   =
   \pm\bigwedge_{\alpha=1}^{r}\Omega_{K_{q_\alpha-1}},
\]
where the sign depends only on orientation conventions.  The coproduct
therefore records all possible first-stage choices of factorisation
channels leading from the full associahedron to the boundary stratum
$F_D$.
\end{theorem}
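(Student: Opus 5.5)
The proof has three parts, taken in the order of the statement: the coproduct formula, the product decomposition of $F_D$, and the residue factorisation.

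\emph{Coproduct.} This follows from the definition of $\mathcal C(P)$ together with the order-reversing dictionary $F_D\subseteq F_E \Leftrightarrow E\subseteq D$. By definition, $\Delta[F_D,K_n]=\sum_{F_D\le H\le K_n}[F_D,H]\otimes[H,K_n]$. Every face $H$ of $K_n$ is of the form $F_E$ for a unique admissible $E$, with $K_n=F_\emptyset$. The condition $F_D\le F_E$ holds exactly when $E\subseteq D$, and any subset of a non-crossing set is itself non-crossing. The correspondence $H\leftrightarrow E$ therefore identifies the index set with the power set of $D$, which gives the displayed formula. A useful by-product is that the interval $[F_D,K_n]$ is isomorphic to the Boolean lattice $2^{D}$, reversed; I will use this again at the end.

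\emph{Face structure.} I will use the standard description of $K_n$ as the polytope dual to the simplicial complex of non-crossing diagonal sets, the cluster complex of type $A_{n-2}$. Its faces containing $F_D$ correspond to admissible $E\supseteq D$, via $F_E\subseteq F_D$. Since $D$ is fixed, such an $E$ is determined by $E\setminus D$, and every diagonal in $E\setminus D$ lies strictly inside exactly one subpolygon $Q_\alpha$. This is because a diagonal not crossing $D$ and not in $D$ must lie in one piece of the dissection. Conversely, any choice of admissible sets $E_\alpha$ of diagonals of the individual $Q_\alpha$ gives a non-crossing union with $D$, since diagonals in different pieces cannot cross. The face poset of $F_D$ is therefore isomorphic to $\prod_\alpha \mathcal F(K_{q_\alpha-1})$, because the $(q_\alpha)$-gon $Q_\alpha$ has associahedron $K_{q_\alpha-1}$. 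A triangle has no diagonals, so its factor is a point, matching the stated convention. Polytopes whose face lattices are isomorphic are combinatorially equivalent, so $F_D\simeq\prod_\alpha K_{q_\alpha-1}$.

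The dimensions are a consistency check. If $|D|=k$, the face has dimension $n-2-k$. On the other side, $\sum_\alpha(q_\alpha-3)=n-2-k$, which follows from the edge count $\sum_\alpha q_\alpha=(n+1)+2k$ together with $r=k+1$.

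\emph{Residue factorisation.} This is the main obstacle, because it is a statement about canonical forms and not only about combinatorics. I would argue it in two steps.

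First, by the recursive axiom of positive geometries recalled before Proposition~5.3, iterated residues along a chain from $K_n$ down to $F_D$ give $\pm\Omega_{F_D}$. Such a chain exists by peeling off the diagonals of $D$ one at a time, and it passes through faces that are themselves positive geometries.

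Second, I need $\Omega_{F_D}=\pm\bigwedge_\alpha\Omega_{K_{q_\alpha-1}}$. For this I invoke the product property of canonical forms: for positive geometries $P,Q$ one has $\Omega_{P\times Q}=\Omega_P\wedge\Omega_Q$ (Arkani-Hamed--Bai--Lam), which follows from uniqueness, since the wedge product has logarithmic poles exactly on $\partial P\times Q\cup P\times\partial Q$ with the correct residues. The delicate point is that $F_D$ is only combinatorially a product. In the kinematic realisation, however, the face $X_{ij}=0$ for $(ij)\in D$ is, as an affine polytope, the product of the kinematic associahedra of the subpolygons, because the ABHY constraints separate into independent sets of variables for each $Q_\alpha$. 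I would verify this separation first, taking it from the ABHY construction. Canonical forms depend only on the actual realisation, so this gives the factorisation up to the orientation sign.

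The final claim is then a rereading of the first part. The first tensor factors $[F_D,F_E]$ enumerate the intermediate strata $F_E$, each reached by the residues along the channels in $E$. By Proposition~5.3, each such $F_E$ corresponds to a choice of first-stage factorisation channels $E\subseteq D$ leading to $F_D$.
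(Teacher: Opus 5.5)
Your proof is correct and follows the paper's three-step skeleton: the order-reversing dictionary for the coproduct, the product decomposition of $F_D$, and the residue axiom together with $\Omega_{P\times Q}=\Omega_P\wedge\Omega_Q$. It is more careful than the paper at two points.

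\emph{Face structure.} The paper only cites ``the standard product structure of faces''. You prove it from the non-crossing complex by distributing $E\setminus D$ among the pieces $Q_\alpha$. (There you mean the faces \emph{contained in} $F_D$, not those containing it.)

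\emph{Residue step.} This is the more important difference. The paper passes straight from the combinatorial identification $F_D\simeq\prod_\alpha K_{q_\alpha-1}$ to $\Omega_{F_D}=\pm\bigwedge_\alpha\Omega_{K_{q_\alpha-1}}$. Canonical forms are not combinatorial invariants, so that equation only has content once $F_D$ is identified, by an actual linear or projective isomorphism, with a product of \emph{realised} associahedra. Your ABHY separation step supplies exactly this.

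The paper's version is shorter and looks realisation-free. Yours is what actually proves the residue identity, at the cost of fixing the kinematic realisation and proving separation for arbitrary $D$. When you write that step, make two points explicit.
\begin{enumerate}
\item On the face, every $X_{ab}$ whose diagonal crosses some element of $D$ equals a positive constant plus a nonnegative combination of the pieces' internal variables. This follows by summing the mesh relations $X_{ij}+X_{i+1,j+1}-X_{i,j+1}-X_{i+1,j}=c_{ij}$ over a rectangle. Hence the positivity constraint on $X_{ab}$ is redundant, and the face is the whole product rather than a slice of it.
\item The induced constraints on each $Q_\alpha$ are again of ABHY type, with positive constants that are sums of the original $c_{ij}$. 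This is what lets you iterate ABHY's single-facet factorisation one diagonal of $D$ at a time.
\end{enumerate}
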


\begin{proof}
The face structure of the associahedron is controlled by polygon
dissections.  A face is determined by specifying a collection $D$ of
pairwise non-crossing diagonals.  Adding further compatible diagonals
refines the dissection and lowers the dimension of the face; hence face
inclusion is reversed with respect to inclusion of diagonal sets.  Thus,
for a face $H$ satisfying
\[
   F_D\subseteq H\subseteq K_n,
\]
there is a unique admissible set $E$ with $E\subseteq D$ such that
$H=F_E$.  Conversely, every subset $E\subseteq D$ is again admissible and
defines an intermediate face $F_E$ with
\[
   F_D\subseteq F_E\subseteq K_n.
\]
Applying the definition of the incidence coproduct to the interval
$[F_D,K_n]$ therefore gives
\[
   \Delta [F_D,K_n]
   =
   \sum_{F_D\leq H\leq K_n}[F_D,H]\otimes[H,K_n]
   =
   \sum_{E\subseteq D}[F_D,F_E]\otimes[F_E,K_n].
\]

It remains to identify the residue.  Cutting the polygon $P_{n+1}$ along
all diagonals in $D$ decomposes it into subpolygons
$Q_1,\ldots,Q_r$.  The standard product structure of faces of the
associahedron gives a combinatorial identification
\[
   F_D \simeq \prod_{\alpha=1}^{r} K_{q_\alpha-1},
\]
where $q_\alpha$ is the number of sides of $Q_\alpha$.  On the level of
positive geometries, the canonical form on a product is the exterior
product of the canonical forms of the factors, and the defining recursive
property of canonical forms states that the residue of $\Omega_{K_n}$ on
a boundary face is the canonical form of that face.  Hence
\[
   \operatorname{Res}_{F_D}\Omega_{K_n}
   =
   \pm \Omega_{F_D}
   =
   \pm\bigwedge_{\alpha=1}^{r}\Omega_{K_{q_\alpha-1}}.
\]
This is precisely the positive-geometric form of planar tree-level
factorisation: the diagonals in $D$ label the propagator channels, and the
resulting subpolygons label the lower-point amplitudes appearing on the
boundary.  The coproduct sums over all sub-dissections $E\subseteq D$, and
therefore over all possible first-stage choices of intermediate
factorisation channels.  This proves the claim.
\end{proof}

\begin{remark}
The theorem should be read as a structural statement about the
associahedral organisation of tree-level planar factorisation.  The
incidence coalgebra itself is the classical coalgebra of intervals in a
finite poset; the additional content here is its interpretation through
canonical forms and their residues on positive geometries.  In this sense,
the construction is not a new incidence coalgebra in isolation, but a
boundary--residue realisation of incidence coalgebra in the positive
geometry of scattering amplitudes.
\end{remark}

\subsection[Worked example: the pentagon associahedron K4]{Worked example: the pentagon associahedron $K_4$}
\label{subsec:K4-boundary-coalgebra}

We now spell out the preceding construction in the first non-trivial
case.  With the convention used in this article, $K_4$ is the pentagon
associahedron.  It has five vertices, five edges, and one two-dimensional
cell.  Combinatorially, it is the associahedron associated with the five
planar cubic diagrams of the five-point scalar amplitude.

Let the five facets of $K_4$ be labelled by the planar kinematic channels
\[
   E_{13},\quad E_{14},\quad E_{24},\quad E_{25},\quad E_{35}.
\]
The five vertices are the intersections of compatible pairs of adjacent
facets:
\[
\begin{aligned}
   V_1 &= E_{13}\cap E_{14}, &
   V_2 &= E_{14}\cap E_{24}, &
   V_3 &= E_{24}\cap E_{25}, \\
   V_4 &= E_{25}\cap E_{35}, &
   V_5 &= E_{35}\cap E_{13}. &
\end{aligned}
\]
Equivalently, the vertices correspond to the five terms in the planar
five-point scalar amplitude,
\[
   m_5^{\mathrm{planar}}
   =
   \frac{1}{X_{13}X_{14}}
   +\frac{1}{X_{14}X_{24}}
   +\frac{1}{X_{24}X_{25}}
   +\frac{1}{X_{25}X_{35}}
   +\frac{1}{X_{35}X_{13}}.
\]

\begin{figure}[h]
\centering
\begin{tikzpicture}[scale=2.0]
  \coordinate (V1) at (90:1.35);
  \coordinate (V2) at (18:1.35);
  \coordinate (V3) at (-54:1.35);
  \coordinate (V4) at (-126:1.35);
  \coordinate (V5) at (162:1.35);
  \draw[thick] (V1)--(V2)--(V3)--(V4)--(V5)--cycle;
  \fill (V1) circle (0.035) node[above] {$V_1$};
  \fill (V2) circle (0.035) node[right] {$V_2$};
  \fill (V3) circle (0.035) node[below right] {$V_3$};
  \fill (V4) circle (0.035) node[below left] {$V_4$};
  \fill (V5) circle (0.035) node[left] {$V_5$};
  \node at ($(V1)!0.5!(V2)+(0.16,0.12)$) {$E_{14}$};
  \node at ($(V2)!0.5!(V3)+(0.18,-0.02)$) {$E_{24}$};
  \node at ($(V3)!0.5!(V4)+(0,-0.18)$) {$E_{25}$};
  \node at ($(V4)!0.5!(V5)+(-0.18,-0.02)$) {$E_{35}$};
  \node at ($(V5)!0.5!(V1)+(-0.16,0.12)$) {$E_{13}$};
  \node at (0,0) {$K_4$};
\end{tikzpicture}
\caption{The pentagon associahedron $K_4$. Its five edges are the five
planar channels of the five-point scalar amplitude, and its five vertices
are the five compatible pairs of channels.}
\label{fig:K4-pentagon-coalgebra}
\end{figure}

The non-empty face poset of $K_4$ is therefore
\[
   \mathcal F(K_4)
   =
   \{K_4\}\cup \{E_{13},E_{14},E_{24},E_{25},E_{35}\}
   \cup \{V_1,V_2,V_3,V_4,V_5\},
\]
with incidence relations determined by the pentagon.  For instance,
$V_1$ lies on $E_{13}$ and $E_{14}$, while $V_2$ lies on $E_{14}$ and
$E_{24}$.

For a facet $E\subset K_4$, the boundary--residue coproduct gives
\[
   \Delta[E,K_4]
   =
   [E,E]\otimes[E,K_4]
   +
   [E,K_4]\otimes[K_4,K_4].
\]
This simply records the two-step interval from a codimension-one boundary
to the full pentagon.

More interestingly, for a vertex $V_i$ there are two intermediate facets,
namely the two edges of the pentagon meeting at that vertex.  Thus, if
$V_i=E_a\cap E_b$, then
\[
\begin{aligned}
   \Delta[V_i,K_4]
   ={}&[V_i,V_i]\otimes[V_i,K_4]
      +[V_i,E_a]\otimes[E_a,K_4]  \\
     &+[V_i,E_b]\otimes[E_b,K_4]
      +[V_i,K_4]\otimes[K_4,K_4].
\end{aligned}
\]
For example,
\[
\begin{aligned}
   \Delta[V_1,K_4]
   ={}&[V_1,V_1]\otimes[V_1,K_4]
      +[V_1,E_{13}]\otimes[E_{13},K_4] \\
     &+[V_1,E_{14}]\otimes[E_{14},K_4]
      +[V_1,K_4]\otimes[K_4,K_4].
\end{aligned}
\]

This formula has a direct amplitude interpretation.  The vertex $V_1$
corresponds to the term
\[
   \frac{1}{X_{13}X_{14}}.
\]
There are two possible first residues leading to this vertex: one may
first take the residue on the facet $X_{13}=0$ and then go to the corner
$X_{14}=0$, or first take the residue on the facet $X_{14}=0$ and then go
to the corner $X_{13}=0$.  The two middle terms in
$\Delta[V_1,K_4]$ record precisely these two possible orders of boundary
factorisation.

Thus, already for $K_4$, the boundary--residue incidence coalgebra does
more than list the faces of the polytope.  It records the nested ways in
which the canonical form can approach a given corner through intermediate
factorisation channels.  This is the finite-dimensional positive-geometric
analogue of the way in which the Connes--Kreimer coproduct records nested
subdivergence structures in a Feynman graph.  In the present tree-level
case the relevant recursion is not renormalisation, but residue
factorisation.

\begin{remark}
The construction above should not be interpreted as a renormalisation
Hopf algebra. At tree level there are no ultraviolet subdivergences and
no counterterms. Rather, the boundary--residue incidence coalgebra is the
positive-geometric analogue of the recursive combinatorial structure
seen in the Connes--Kreimer Hopf algebra: divergent subgraphs are
replaced by boundary faces, contracted cographs by face intervals, and
renormalisation recursion by residue recursion.
\end{remark}

\subsection{Loop corrections and logarithmic singularity strata}
\label{subsec:loop-scope}

The associahedral theorem above is a tree-level statement, but its proof
uses only two ingredients: a finite boundary poset and the recursive
residue property of the canonical form.  This makes the construction
portable.  The associahedron should not be used at loop level merely by
analogy; rather, when a loop-level positive geometry is known, its own
boundary stratification supplies the relevant poset.

\begin{definition}[Loop boundary--residue incidence coalgebra]
Let $\mathcal P_{n,L}$ be a loop-level positive geometry, or a positive
cellular model for a loop integrand, with finite boundary poset
$\mathcal B(\mathcal P_{n,L})$.  Let $F\leq G$ denote inclusion of
boundary strata.  Define
\[
   \mathcal C(\mathcal P_{n,L})
   :=
   \mathbb Q\bigl\langle [F,G] : F\leq G,
   \; F,G\in \mathcal B(\mathcal P_{n,L})\bigr\rangle
\]
with coproduct
\[
   \Delta [F,G]
   =
   \sum_{F\leq H\leq G}[F,H]\otimes[H,G]
\]
and counit
\[
   \varepsilon([F,G])=
   \begin{cases}
   1, & F=G,\\
   0, & F<G.
   \end{cases}
\]
We call $\mathcal C(\mathcal P_{n,L})$ the loop boundary--residue
incidence coalgebra of $\mathcal P_{n,L}$.
\end{definition}

\begin{proposition}
The coproduct on $\mathcal C(\mathcal P_{n,L})$ is coassociative.  If the
canonical form $\Omega_{\mathcal P_{n,L}}$ has boundary residues equal, up
to orientation, to the canonical forms of boundary positive geometries,
then the coproduct records all possible intermediate residue
factorisations of $\Omega_{\mathcal P_{n,L}}$.
\end{proposition}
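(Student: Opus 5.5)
The plan has two parts, matching the two assertions of the statement. For coassociativity, the argument is exactly the one given for the earlier proposition on $\mathcal C(P)$, since the definition of $\mathcal C(\mathcal P_{n,L})$ uses only the finite poset $\mathcal B(\mathcal P_{n,L})$. First, the coproduct is well defined: for each $F\leq G$ the interval $\{H: F\leq H\leq G\}$ is finite, so $\Delta[F,G]$ is a finite sum. Next we expand both iterated coproducts. We get
\[
(\Delta\otimes\mathrm{id})\Delta[F,G]=\sum_{F\leq H\leq G}\ \sum_{F\leq K\leq H}[F,K]\otimes[K,H]\otimes[H,G]
\]
and
\[
(\mathrm{id}\otimes\Delta)\Delta[F,G]=\sum_{F\leq K\leq G}\ \sum_{K\leq H\leq G}[F,K]\otimes[K,H]\otimes[H,G].
\]
Each double sum runs over the same index set, namely the chains $F\leq K\leq H\leq G$, so the two expressions agree term by term. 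For completeness we also check the counit axiom: applying $\varepsilon\otimes\mathrm{id}$ to $\Delta[F,G]$ keeps only the term with $H=F$, which returns $[F,G]$; the computation for $\mathrm{id}\otimes\varepsilon$ is symmetric. Only reflexivity and transitivity of $\leq$ are used here, so no geometric input is needed.

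For the residue statement, we assume as hypothesis that $\operatorname{Res}_H\Omega_{\mathcal P_{n,L}}=\pm\Omega_H$ for every boundary stratum $H$ of codimension one. We also assume that each boundary stratum is itself a positive geometry whose boundary poset is the lower interval in $\mathcal B(\mathcal P_{n,L})$. Induction on the length of a maximal chain then gives
\[
\operatorname{Res}_{F}\cdots\operatorname{Res}_{H}\Omega_{\mathcal P_{n,L}}=\pm\Omega_F
\]
for any chain of strata descending from $\mathcal P_{n,L}$ through $H$ to $F$. As in the proof of the compatibility proposition for $\mathcal C(P)$, each term $[F,H]\otimes[H,\mathcal P_{n,L}]$ of $\Delta[F,\mathcal P_{n,L}]$ then corresponds bijectively to the two-stage factorisation
\[
\Omega_{\mathcal P_{n,L}}\to\pm\Omega_H\to\pm\Omega_F.
\]
Conversely, every intermediate residue factorisation passes through some stratum $H$ with $F\leq H\leq\mathcal P_{n,L}$, so it appears among these terms. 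Iterating $\Delta$ and using coassociativity extends this correspondence to multi-stage factorisations indexed by chains.

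The main obstacle is this second part, not the algebra. At loop level, boundary strata need not be polytope faces, and a residue along a higher-codimension stratum might depend on the order of the residues taken. I would handle this by reading ``records'' combinatorially: distinct intermediate strata label distinct factorisation routes, and the resulting form agrees with $\pm\Omega_F$ in every case. Order-dependence is then absorbed into the orientation sign, exactly as in the tree-level statement. I would make the inductive assumption explicit as part of the hypothesis, so that the claim reduces to the same argument used for $\mathcal C(P)$.
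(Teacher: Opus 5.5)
Your proposal is correct and follows the paper's own argument. Coassociativity is the same interval-poset computation, with both iterated coproducts indexed by chains $F\leq K\leq H\leq G$, and the residue claim comes from matching each term $[F,H]\otimes[H,G]$ to the two-stage residue passage through $H$ via the recursive residue property of canonical forms. You are more explicit than the paper: you check the counit, and you state outright the inductive hypothesis that each stratum is itself a positive geometry whose boundary poset is the corresponding lower interval, which the paper leaves implicit in its appeal to the defining property of positive geometries.
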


\begin{proof}
Coassociativity is the same interval-poset calculation used above:
\[
   (\Delta\otimes\mathrm{id})\Delta[F,G]
   =
   \sum_{F\leq H\leq K\leq G}
   [F,H]\otimes[H,K]\otimes[K,G]
   =
   (\mathrm{id}\otimes\Delta)\Delta[F,G].
\]
If $F\leq H\leq G$, then the corresponding term
$[F,H]\otimes[H,G]$ records the two-stage residue process in which one
first approaches the intermediate boundary stratum $H$ and then the lower
stratum $F$.  Since the defining property of a positive geometry is that
residues of the canonical form on boundary components are the canonical
forms of those boundary components, the incidence coproduct is the
combinatorial record of all such nested residue decompositions.
\end{proof}

For planar loop integrands this gives a concrete prescription.  One
replaces the tree-level associahedral face poset by the boundary poset of
the relevant loop positive geometry.  In planar $\mathcal N=4$
super-Yang--Mills theory, the loop integrand is described by the positive
Grassmannian/amplituhedron formalism, with $L$ loop variables represented
by $L$ lines in momentum-twistor space
\cite{ArkaniHamedBourjailyCachazoCaronHuotTrnka2011,ArkaniHamedPositiveGrassmannian2016,ArkaniHamedTrnka2014,ArkaniHamedHodgesTrnka2014}.  The corresponding coalgebra is therefore not the associahedral coalgebra
$\mathcal C(K_n)$, but the incidence coalgebra of the boundary
stratification of the loop amplituhedron, or of a chosen positive cell
decomposition computing the same integrand.

For planar scalar amplitudes there is an even closer one-loop analogue of
the associahedral story.  The halohedron gives a positive geometry for the
planar one-loop biadjoint scalar integrand
\cite{Salvatori2018,SalvatoriCacciatori2018}.  If $H_n$ denotes the
halohedron associated with the $n$-point one-loop planar scalar amplitude,
we define
\[
   \Delta [F,H_n]
   =
   \sum_{F\leq G\leq H_n}[F,G]\otimes[G,H_n],
\]
where the sum ranges over the boundary strata $G$ of the halohedron
containing $F$.  Boundary facets of the halohedron include the loop-level
degenerations and factorisation channels appropriate to the one-loop
integrand; hence this coproduct records nested one-loop planar scalar
residue channels in exactly the same incidence-coalgebraic sense in which
Theorem~\ref{thm:Kn-boundary-residue} records nested tree-level planar
factorisations.

The non-planar case requires a separate formulation.  There is evidence
that non-planar maximally supersymmetric amplitudes retain logarithmic
singularity and leading-singularity structures familiar from the planar
story
\cite{ArkaniHamedBourjailyCachazoTrnka2014,BernHerrmannLitseyStankowiczTrnka2015,BernHerrmannLitseyStankowiczTrnka2016}.  However, no universal non-planar positive
geometry playing the role of the planar amplituhedron is presently known.
For this reason one can define, without assuming such a geometry, a
logarithmic-singularity incidence coalgebra as follows.  Let
$\mathcal S_{n,L}^{\mathrm{np}}$ be any finite poset of logarithmic
singularity strata of a non-planar $L$-loop integrand, ordered by inclusion
or by specialization of singular loci.  Set
\[
   \mathcal C_{\log}^{\mathrm{np}}(n,L)
   :=
   \mathbb Q\bigl\langle [S,T]: S\leq T,
   \; S,T\in\mathcal S_{n,L}^{\mathrm{np}}\bigr\rangle,
\]
with
\[
   \Delta[S,T]=\sum_{S\leq U\leq T}[S,U]\otimes[U,T].
\]
This is an incidence coalgebra whether or not the singularity poset is
known to come from the boundary stratification of a positive geometry.  If
a non-planar positive geometry is eventually identified, this logarithmic
singularity coalgebra should be recovered from its boundary poset.  Until
then it gives a precise algebraic language for organizing nested
non-planar logarithmic singularities.

This gives a precise scope for loop corrections.  For planar loop
integrands, one should replace the associahedral face poset by the
boundary poset of the corresponding loop positive geometry.  For one-loop
planar scalar amplitudes, the halohedron gives a concrete candidate.  For
non-planar amplitudes, the same coalgebraic language may organise known
logarithmic singularity structures, but the relevant positive geometry
remains part of the open problem.

\section{Cellular Reconstruction and Incidence Coalgebras}
\label{sec:cellular-reconstruction}

The constructions above suggest a broader mathematical pattern.  The
boundary--residue coalgebra of a positive geometry is not merely an ad hoc
device for associahedra.  It is an incidence coalgebra attached to a
stratified space whose strata carry physical residue data.  The same
incidence mechanism already occurs in algebraic topology, where spaces are
built from cells and the topology of the resulting space is encoded by the
way in which the cells are incident to one another.  This section makes that
comparison precise.

The aim is not to identify spacetime with an associahedron, nor to claim that
a Lorentzian metric is determined by a face poset.  The point is more basic:
cellular models of spacetime and positive geometries of amplitudes both have
combinatorial skeletons, and both skeletons naturally define incidence
coalgebras.  On the spacetime side the incidence data organise cellular
reconstruction and homology.  On the amplitude side the incidence data
organise canonical-form residues and factorisation channels.

\subsection{Skeleta, face posets and reconstruction}

Let $X$ be a finite regular CW complex.  We write $X^{(k)}$ for its
$k$-skeleton, namely the union of all cells of dimension at most $k$.  Thus

a finite CW complex is built inductively by attaching $k$-cells to the
$(k-1)$-skeleton.  Regularity means that the closure of each open cell is a
closed ball and that its boundary is a union of lower-dimensional cells.  This
is the setting in which the face-poset language behaves most cleanly
\cite{Whitehead1949,Hatcher2002,Munkres1984,Bjorner1984}.

Let $\mathcal F(X)$ denote the poset of non-empty closed cells of $X$, ordered
by inclusion.  A chain
\[
   \sigma_0 < \sigma_1 < \cdots < \sigma_r
\]
in $\mathcal F(X)$ is a flag of cells.  The order complex
$\Delta(\mathcal F(X))$ is the simplicial complex whose simplices are precisely
such chains.  In the simplicial case this is the usual barycentric
subdivision; for finite regular CW complexes it is the barycentric
subdivision in the corresponding cellular sense.

\begin{proposition}[Combinatorial reconstruction]
\label{prop:face-poset-reconstruction}
Let $X$ be a finite regular CW complex and let $\mathcal F(X)$ be its face
poset of non-empty closed cells.  Then the order complex
$\Delta(\mathcal F(X))$ is naturally identified with the barycentric
subdivision of $X$.  Consequently its geometric realisation is homeomorphic
to $X$:
\[
   |\Delta(\mathcal F(X))| \cong X .
\]
In particular, for a finite simplicial complex $K$, the face poset of $K$
determines the underlying polyhedron $|K|$ up to homeomorphism.
\end{proposition}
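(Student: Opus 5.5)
The plan is to prove the statement by induction on skeleta, building a homeomorphism $h_k\colon |\Delta(\mathcal F(X^{(k)}))|\to X^{(k)}$ cell by cell. These maps will be compatible with the inclusions $X^{(k-1)}\subset X^{(k)}$. The combinatorial input is the observation that, for a closed cell $\bar\sigma$, the order complex of the principal lower set $\mathcal F(X)_{\le\sigma}$ is the cone, with apex the vertex $\sigma$, over the order complex of the open lower set $\mathcal F(X)_{<\sigma}$. Moreover, $\mathcal F(X)_{<\sigma}$ is exactly the face poset of the subcomplex $\partial\bar\sigma$. Regularity enters precisely here: it guarantees that $\partial\bar\sigma$ is a union of closed cells, hence a regular CW subcomplex of dimension $\dim\sigma-1$. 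It also guarantees that $(\bar\sigma,\partial\bar\sigma)\cong(B^k,S^{k-1})$.

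For $k=0$ the claim is trivial, since both sides are the discrete set of vertices. For the inductive step, assume $h_{k-1}$ is a homeomorphism. Fix a $k$-cell $\sigma$. By induction, restricted to the subcomplex $\partial\bar\sigma$, the map $h_{k-1}$ gives a homeomorphism $|\Delta(\mathcal F(X)_{<\sigma})|\cong\partial\bar\sigma\cong S^{k-1}$. Extend it radially, or by coning, to a homeomorphism from the cone $|\Delta(\mathcal F(X)_{\le\sigma})|$ onto $\bar\sigma\cong B^k$, sending the apex to an interior point. This uses the Alexander trick: a homeomorphism $S^{k-1}\to S^{k-1}$ extends conically to $B^k\to B^k$. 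Because distinct $k$-cells meet only along their boundaries, and the extensions agree with $h_{k-1}$ there, these extensions glue. The order complex of $\mathcal F(X^{(k)})$ is the union of the cones $|\Delta(\mathcal F_{\le\sigma})|$ along $|\Delta(\mathcal F(X^{(k-1)}))|$. Meanwhile $X^{(k)}$ is the pushout of $X^{(k-1)}$ and the closed $k$-cells along their boundaries. A continuous bijection between these finite, compact Hausdorff spaces is therefore a homeomorphism, and this gives $h_k$.

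Taking $k=\dim X$ gives $|\Delta(\mathcal F(X))|\cong X$. The identification with the barycentric subdivision is by construction: each simplex of the order complex is sent onto a closed region spanned by a flag of "barycentres", one coning point per cell. In the simplicial case, choosing the apex to be the actual barycentre and extending linearly recovers the classical barycentric subdivision $\mathrm{sd}\,K$. This yields the final sentence, since $|K|\cong|\mathrm{sd}\,K|=|\Delta(\mathcal F(K))|$ depends only on the poset.

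The main obstacle is the inductive step's use of regularity. One needs to check carefully that $\partial\bar\sigma$ really is a subcomplex whose face poset is $\mathcal F_{<\sigma}$; this could fail without regularity, where attaching maps can collapse cells. One also needs the closed cell to be a genuine ball with its boundary as the boundary sphere. I would cite the standard treatment (Bj\"orner, Lundell--Weingram) for this fact rather than reprove it. Everything else is coning and a compactness argument.
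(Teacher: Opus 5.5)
Your argument is correct and uses the same cell-by-cell barycentric coning as the paper, which only asserts that subdividing each closed cell from an interior point does not change the space. Your induction on skeleta, with $\Delta(\mathcal F(X)_{\le\sigma})$ as the cone over $\Delta(\mathcal F(X)_{<\sigma})$ and the compact--Hausdorff gluing step, supplies the justification the paper leaves out. One small fix: state explicitly in the inductive hypothesis that $h_{k-1}\bigl(|\Delta(\mathcal F(X)_{\le\tau})|\bigr)=\bar\tau$ for every cell $\tau$ of dimension less than $k$. Your construction already guarantees this, and it is what makes the restriction of $h_{k-1}$ land exactly onto $\partial\bar\sigma$.
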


\begin{proof}
In a simplicial complex, a simplex of the barycentric subdivision is a chain of
non-empty faces of an original simplex.  Such chains are exactly the simplices
of the order complex of the face poset.  Hence
$\Delta(\mathcal F(K))$ is the barycentric subdivision of $K$.

For a finite regular CW complex the same argument applies cell by cell.  The
regularity assumption ensures that the closure of each cell is a ball and that
its boundary is a union of lower-dimensional cells.  Choosing the barycentre
of each cell, chains of incident cells give the simplices of the barycentric
subdivision.  These chains are precisely the simplices of
$\Delta(\mathcal F(X))$.  Barycentric subdivision does not change the
underlying topological realisation, and the result follows.
\end{proof}

This proposition is the precise topological form of the slogan that a space
can be reconstructed from its cellular skeleton.  The statement is not that a
metric, a smooth structure, or a causal structure is determined by the face
poset.  The statement is that, under the regularity hypotheses above, the
combinatorics of incidence reconstructs the underlying topological space.
This is the level at which the comparison with positive geometry becomes
mathematically sharp.

\subsection{The cellular incidence coalgebra}

The same interval coalgebra used for positive geometries can now be attached
to a cellular space.

\begin{definition}[Cellular incidence coalgebra]
Let $X$ be a finite regular CW complex with face poset $\mathcal F(X)$.  The
cellular incidence coalgebra of $X$ is the vector space
\[
   \mathcal C_{\mathrm{cell}}(X)
   :=
   \mathbb Q\bigl\langle [\sigma,\tau] :
   \sigma\leq \tau,
   \; \sigma,\tau\in\mathcal F(X)\bigr\rangle
\]
with coproduct
\[
   \Delta[\sigma,\tau]
   =
   \sum_{\sigma\leq \rho\leq \tau}
   [\sigma,\rho]\otimes[\rho,\tau]
\]
and counit
\[
   \varepsilon([\sigma,\tau])=
   \begin{cases}
   1, & \sigma=\tau,\\
   0, & \sigma<\tau.
   \end{cases}
\]
\end{definition}

\begin{proposition}
The cellular incidence coalgebra $\mathcal C_{\mathrm{cell}}(X)$ is
coassociative.  Its iterated coproduct records all flags of cells between two
given cells $\sigma\leq\tau$.
\end{proposition}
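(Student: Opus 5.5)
The plan is to reuse the interval computation already carried out for the boundary--residue coalgebra, since $\mathcal C_{\mathrm{cell}}(X)$ is formally the incidence coalgebra of the finite poset $\mathcal F(X)$. First I will check that $\Delta$ is well defined. Because $X$ is finite, $\mathcal F(X)$ is finite, so each interval $\{\rho:\sigma\leq\rho\leq\tau\}$ is finite and the sum defining $\Delta[\sigma,\tau]$ is a finite element of $\mathcal C_{\mathrm{cell}}(X)\otimes\mathcal C_{\mathrm{cell}}(X)$. Moreover, each summand $[\sigma,\rho]\otimes[\rho,\tau]$ involves only genuine intervals, because $\sigma\leq\rho$ and $\rho\leq\tau$.

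For coassociativity I will expand both iterates on a basis element. Applying $\Delta\otimes\mathrm{id}$ to $\Delta[\sigma,\tau]$ gives
\[
\sum_{\sigma\leq\rho\leq\tau}\ \sum_{\sigma\leq\kappa\leq\rho}[\sigma,\kappa]\otimes[\kappa,\rho]\otimes[\rho,\tau].
\]
Applying $\mathrm{id}\otimes\Delta$ gives
\[
\sum_{\sigma\leq\kappa\leq\tau}\ \sum_{\kappa\leq\rho\leq\tau}[\sigma,\kappa]\otimes[\kappa,\rho]\otimes[\rho,\tau].
\]
By transitivity of the order, both double sums range over the same set, namely pairs with $\sigma\leq\kappa\leq\rho\leq\tau$. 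They are therefore equal, exactly as in the earlier proposition for $\mathcal C(P)$. I will also record the counit axiom $(\varepsilon\otimes\mathrm{id})\Delta=\mathrm{id}=(\mathrm{id}\otimes\varepsilon)\Delta$. It holds because $\varepsilon$ kills every term except $\rho=\sigma$, respectively $\rho=\tau$.

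For the flag statement I will argue by induction on $k$. The claim is that the $k$-fold iterated coproduct is
\[
\Delta^{(k)}[\sigma,\tau]=\sum_{\sigma=\rho_0\leq\rho_1\leq\cdots\leq\rho_{k+1}=\tau}[\rho_0,\rho_1]\otimes\cdots\otimes[\rho_k,\rho_{k+1}].
\]
Coassociativity makes $\Delta^{(k)}$ independent of bracketing. The inductive step applies $\Delta$ to the last tensor factor and uses the same reindexing as above. Thus the terms of $\Delta^{(k)}[\sigma,\tau]$ are in bijection with weak multichains of length $k+1$ from $\sigma$ to $\tau$ in $\mathcal F(X)$.

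The only point needing care, and the one I expect to be the main obstacle, is the phrase ``all flags''. Weak multichains contain repeated cells, whereas flags are the strict chains that index simplices of $\Delta(\mathcal F(X))$. I will resolve this by passing to the reduced coproduct $\bar\Delta[\sigma,\tau]=\Delta[\sigma,\tau]-[\sigma,\sigma]\otimes[\sigma,\tau]-[\sigma,\tau]\otimes[\tau,\tau]$ for $\sigma<\tau$. Its iterates give exactly the strict chains $\sigma<\rho_1<\cdots<\rho_k<\tau$. By Proposition~\ref{prop:face-poset-reconstruction}, these correspond to simplices of the barycentric subdivision lying over the interval $[\sigma,\tau]$. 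Equivalently, every weak multichain collapses uniquely to a strict flag by deleting repeats, so each flag of cells between $\sigma$ and $\tau$ appears in the iterated coproducts, which is the sense of ``records''.
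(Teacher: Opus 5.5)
Your proposal is correct and follows the paper's proof, which rewrites both iterated coproducts as the same single sum over $\sigma\leq\rho\leq\eta\leq\tau$ and then notes that iterating $\Delta$ inserts longer chains of intermediate cells. Your additions make explicit what the paper leaves implicit: the counit check, the induction giving the closed form of $\Delta^{(k)}$, and the reduced coproduct that separates strict flags from weak multichains with repeated cells. The last point matters because it is what justifies the word ``flags''.
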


\begin{proof}
The proof is identical to the interval-poset calculation already used for
positive geometries.  For any interval $[\sigma,\tau]$,
\[
   (\Delta\otimes\mathrm{id})\Delta[\sigma,\tau]
   =
   \sum_{\sigma\leq\rho\leq\eta\leq\tau}
   [\sigma,\rho]\otimes[\rho,\eta]\otimes[\eta,\tau]
\]
and
\[
   (\mathrm{id}\otimes\Delta)\Delta[\sigma,\tau]
   =
   \sum_{\sigma\leq\rho\leq\eta\leq\tau}
   [\sigma,\rho]\otimes[\rho,\eta]\otimes[\eta,\tau].
\]
The two sums are the same.  Iterating $\Delta$ inserts longer chains of
intermediate cells between $\sigma$ and $\tau$, and therefore records flags of
cells.
\end{proof}

To recover the usual cellular chain complex one must add orientation data.
If $X$ is an oriented finite regular CW complex, the cellular boundary map is
\[
   \partial \tau
   =
   \sum_{\substack{\sigma<\tau\\ \dim\sigma=\dim\tau-1}}
   [\tau:\sigma] \, \sigma,
\]
where $[\tau:\sigma]\in\{0,\pm1\}$, or more generally an integer incidence
number, records the degree with which the oriented boundary of $\tau$ meets
$\sigma$.

\begin{proposition}[Rank-one incidence and cellular boundary]
\label{prop:rank-one-boundary}
Let $X$ be an oriented finite regular CW complex.  The rank-one intervals
$[\sigma,\tau]$ with $\dim\tau=\dim\sigma+1$, together with the orientation
incidence numbers $[\tau:\sigma]$, determine the cellular boundary operator.
Consequently the incidence coalgebra supplies the unsigned flag structure,
while the oriented incidence numbers supply the signs needed for cellular
homology.
\end{proposition}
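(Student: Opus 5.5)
The plan is to make explicit the identification between the rank-one part of the face poset and the nonzero entries of the cellular boundary matrix. For an oriented finite regular CW complex, cellular homology is computed from the chain groups $C_k(X)=\mathbb Z\langle k\text{-cells}\rangle$, and the boundary $\partial_k:C_k\to C_{k-1}$ is given on a generator $\tau$ by $\partial\tau=\sum_\sigma [\tau:\sigma]\,\sigma$. Here $[\tau:\sigma]$ is the degree of the composite $S^{k-1}\to X^{(k-1)}\to X^{(k-1)}/(X^{(k-1)}\setminus e_\sigma)\cong S^{k-1}$, where the first map is the attaching map of $\tau$. Since $\partial$ is linear, it is determined by the collection of numbers $[\tau:\sigma]$ over all pairs with $\dim\tau=\dim\sigma+1$.

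The first step is to show that only pairs appearing as rank-one intervals $[\sigma,\tau]$ in $\mathcal C_{\mathrm{cell}}(X)$ can contribute. By regularity, the boundary of the closed cell $\bar\tau$ is a union of closed cells. If $\sigma\not\leq\tau$, the attaching map of $\tau$ misses the open cell $e_\sigma$, so the composite above is not surjective and has degree zero. Hence $[\tau:\sigma]\neq0$ forces $\sigma<\tau$ with codimension one. Since the face poset of a regular CW complex is graded by dimension, such pairs are exactly the rank-one intervals.

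The second step, which I expect to be the main obstacle, is to show that every rank-one interval in fact carries incidence number $\pm1$. This is what makes the statement ``the coalgebra supplies the unsigned structure'' sharp. I would invoke the standard fact, following Björner and Lundell--Weingram, that in a regular CW complex $\partial\bar\tau\cong S^{k-1}$ is a subcomplex in which each $(k-1)$-cell $\sigma<\tau$ is a top cell attached homeomorphically. Collapsing the complement of $e_\sigma$ is then a degree $\pm1$ map, with the sign fixed by the chosen orientations of $\tau$ and $\sigma$. Alternatively, one can cite the fact that the order complex of the open interval $(\hat0,\tau)$ is a sphere, by Proposition~\ref{prop:face-poset-reconstruction} applied to $\partial\bar\tau$, and use the resulting local homology computation.

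Combining the two steps, the support of $\partial$ is exactly the set of rank-one intervals, read off from $\Delta$ as the terms $[\sigma,\tau]$ admitting no intermediate $\rho$ other than $\sigma$ and $\tau$. The orientation data then supply the remaining function $[\tau:\sigma]\in\{\pm1\}$ on this set, so $\partial$ is determined. As a consistency check, and to explain why the signs are genuinely extra data, I would note that $\partial^2=0$ amounts to this: each rank-two interval $[\sigma,\tau]$ contains exactly two intermediate $\rho$, by the diamond property of regular CW posets. This is visible from $\Delta[\sigma,\tau]$. The signs must cancel across these two terms, which the unsigned coalgebra cannot express by itself.
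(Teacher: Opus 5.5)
Your proposal is correct and follows the same route as the paper. In both, the sum defining $\partial\tau$ runs over the codimension-one faces of $\tau$, these are exactly the rank-one intervals $[\sigma,\tau]$, and the orientation numbers supply the coefficients. Your degree argument for vanishing when $\sigma\not\leq\tau$, and your regularity argument giving $[\tau:\sigma]=\pm1$ on every covering pair, justify what the paper only asserts. The second of these is what the paper's closing claim needs: that the decorated rank-one data are \emph{equivalent} to $\partial$, and not merely enough to determine it.
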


\begin{proof}
The cellular boundary of a cell $\tau$ is computed by summing over precisely
those cells $\sigma$ of codimension one lying in the boundary of $\tau$.
These are exactly the rank-one intervals $[\sigma,\tau]$ in the face poset.
The incidence coalgebra records the existence of these rank-one intervals and
of all longer flags containing them.  The additional orientation numbers
$[\tau:\sigma]$ determine the signed coefficients in the cellular boundary
map.  Hence the rank-one part of the incidence structure, decorated by
orientation data, is equivalent to the cellular boundary operator.
\end{proof}

The comparison with positive geometry is now direct.  For a positive geometry
$P$, the unsigned interval $[F,G]$ records a possible passage from a boundary
stratum $G$ to a deeper boundary stratum $F$.  Orientation conventions for
canonical forms determine signs of residues.  For a CW complex, the interval
$[\sigma,\tau]$ records a possible passage from a cell to one of its boundary
cells, while incidence numbers determine the signs in the cellular boundary
map.  In both cases the primitive structure is a poset of boundary relations
plus additional sign or orientation data.

\begin{example}[A two-simplex]
Let $\Delta^2$ be an oriented triangle with vertices $v_0,v_1,v_2$, edges
$e_{01},e_{12},e_{02}$ and two-cell $f$.  The face poset contains intervals
such as $[v_0,e_{01}]$, $[e_{01},f]$ and $[v_0,f]$.  The coproduct of the last
interval is
\[
   \Delta[v_0,f]
   =
   [v_0,v_0]\otimes[v_0,f]
   +[v_0,e_{01}]\otimes[e_{01},f]
   +[v_0,e_{02}]\otimes[e_{02},f]
   +[v_0,f]\otimes[f,f].
\]
Thus the coalgebra records the two possible one-edge routes from the vertex
$v_0$ to the face $f$.  The oriented cellular boundary
\[
   \partial f = e_{01}+e_{12}-e_{02}
\]
requires the additional incidence signs.  This is exactly analogous to the
positive-geometric situation: the face poset records possible residue flags,
whereas the canonical form and orientation conventions determine the signed
residue.
\end{example}

\subsection{Spacetime as a labelled cellular object}

In classical general relativity, spacetime is modelled by a smooth Lorentzian
manifold.  A smooth manifold can often be triangulated or given a handle or CW
decomposition, and in the PL category one may work directly with a finite or
locally finite cellular model.  The topological content of such a model is
combinatorial, but the physical spacetime contains more data than topology:
it carries metric, causal and differentiable structure.

This distinction is essential.  The face poset of a triangulation determines
the underlying polyhedron after barycentric subdivision, but it does not
determine edge lengths, deficit angles, light cones or curvature.  These must
be added as labels or additional structures.  Regge calculus is the classical
model of this idea: the smooth metric is replaced by a piecewise-flat
simplicial geometry, and curvature is concentrated on codimension-two hinges
through deficit angles \cite{Regge1961}.  Causal dynamical triangulations add
Lorentzian causal restrictions to the triangulated histories
\cite{AmbjornJurkiewiczLoll2001}.  Causal set theory takes the causal order as
the primitive discrete structure, with local finiteness providing volume
information \cite{BombelliLeeMeyerSorkin1987}.  Spin-network and spin-foam
models decorate graphs and two-complexes by representation-theoretic labels,
thereby turning combinatorial data into quantum-geometric data
\cite{RovelliSmolin1995,Rovelli2004,AshtekarLewandowski2004}.

These examples show that the passage from combinatorics to spacetime is never
only a matter of an abstract poset.  One needs labels.  Nevertheless, the
incidence skeleton is the common underlying object.  We may therefore isolate
the following structure.

\begin{definition}[Labelled spacetime incidence structure]
A labelled cellular spacetime model consists of a finite or locally finite
regular CW complex $X$, its face poset $\mathcal F(X)$, and a system of labels
$\lambda$ on cells or incidences.  Depending on the model, the labels may be
metric data, causal order data, representation-theoretic data, or weights in a
state sum.  The associated incidence coalgebra is the interval coalgebra of
$\mathcal F(X)$, together with the label system $\lambda$.
\end{definition}

For example, in Regge calculus the labels include squared edge lengths; in a
causal set the primitive relation is a partial order and the counting measure
is part of the geometry; in spin-foam models the labels include group
representations and intertwiners.  The incidence coalgebra does not replace
these labels.  It provides the common combinatorial backbone on which they
live.

\subsection{The common lower row}

The positive-geometric construction and the cellular-spacetime construction
can now be displayed as parallel passages through incidence data.  On the
amplitude side one has
\[
\begin{array}{ccc}
\text{positive geometry} & \longrightarrow & \text{canonical form / amplitude} \\
\downarrow & & \downarrow \\
\text{face poset} & \longrightarrow & \text{boundary--residue coalgebra},
\end{array}
\]
where the right vertical arrow is implemented by taking residues of the
canonical form.  On the cellular spacetime side one has
\[
\begin{array}{ccc}
\text{cellular spacetime} & \longrightarrow & \text{cellular chain complex / geometry} \\
\downarrow & & \downarrow \\
\text{face poset} & \longrightarrow & \text{cellular incidence coalgebra},
\end{array}
\]
where the right vertical arrow is implemented by attaching maps, incidence
numbers, and any additional metric or causal labels.

The lower row is the same in both diagrams.  In each case, one passes from a
geometric object to a poset of boundary relations and then to an incidence
coalgebra.  What differs is the extra structure placed on that coalgebra.  In
cellular topology the extra structure gives boundary operators, homology and,
with further labels, metric or causal geometry.  In positive geometry the
extra structure is the canonical form, whose logarithmic singularities and
residues produce factorisation.

\begin{proposition}[Common incidence skeleton]
Let $P$ be a positive geometry with finite face poset, and let $X$ be a finite
regular CW complex.  The coalgebras $\mathcal C(P)$ and
$\mathcal C_{\mathrm{cell}}(X)$ are both instances of the same interval
coalgebra construction applied to a finite poset.  In the first case,
intervals encode nested boundary residues of canonical forms; in the second,
intervals encode nested cell incidences and, after adding orientation data,
cellular boundaries.
\end{proposition}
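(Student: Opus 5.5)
The plan is to show that both coalgebras are produced by one functorial construction, applied to two specific finite posets. I would first write down that construction abstractly. For any finite poset $Q$, let $I(Q)$ be the $\mathbb Q$-span of the intervals $[a,b]$ with $a\le b$ in $Q$. Give it the coproduct $\Delta[a,b]=\sum_{a\le c\le b}[a,c]\otimes[c,b]$ and the counit $\varepsilon[a,b]=\delta_{a,b}$. The coalgebra axioms hold for every finite $Q$:

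\begin{itemize}
\item Coassociativity follows from the computation already carried out twice above, since both iterated coproducts equal the sum over chains $a\le c\le d\le b$.
\item The counit property holds because $(\varepsilon\otimes\mathrm{id})\Delta[a,b]$ retains only the term $c=a$, and symmetrically for $\mathrm{id}\otimes\varepsilon$.
\end{itemize}

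None of this uses anything beyond the partial order and the finiteness of intervals.

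Next I would check by direct comparison of definitions that $\mathcal C(P)=I(\mathcal F(P))$ and $\mathcal C_{\mathrm{cell}}(X)=I(\mathcal F(X))$. The generators, the coproduct formulas and the counits coincide verbatim once the face poset is substituted for $Q$. Finiteness holds in both cases: $\mathcal F(P)$ is finite by hypothesis, and $\mathcal F(X)$ is finite because $X$ is a finite regular CW complex. Regularity guarantees that $\mathcal F(X)$ really is a poset under inclusion of closed cells, since closures are balls whose boundaries are unions of cells. This settles the first assertion. I would also note that any order isomorphism $Q\cong Q'$ induces a coalgebra isomorphism $I(Q)\cong I(Q')$. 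So whenever $\mathcal F(P)\cong\mathcal F(X)$, for instance when $X$ is the boundary cell structure of a polytope $P$, the two coalgebras are literally isomorphic.

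For the interpretive clauses I would invoke earlier results. On the positive-geometry side, the Proposition on compatibility of residues with the boundary coalgebra gives the statement: each term $[F,H]\otimes[H,G]$ corresponds to the two-stage residue $\Omega_G\to\Omega_H\to\Omega_F$, and iterating $\Delta$ corresponds to iterated residues along flags. On the cellular side, the coassociativity proposition for $\mathcal C_{\mathrm{cell}}(X)$ shows that iterated coproducts enumerate flags of cells. Proposition~\ref{prop:rank-one-boundary} shows that the rank-one intervals, decorated by the incidence numbers $[\tau:\sigma]$, recover the cellular boundary operator $\partial$.

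The main obstacle is not mathematical depth but precision in the interpretive clause for $P$, where residues are defined only up to orientation signs. I would handle this by stating explicitly that $\mathcal C(P)$ records unsigned flags, with signs supplied separately by the orientation conventions. This matches exactly the role that the incidence numbers play for $X$, so that the parallel is an honest statement about the shared unsigned skeleton.
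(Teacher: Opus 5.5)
Your proposal is correct and follows essentially the same route as the paper's proof, which likewise observes that $\mathcal C(P)$ and $\mathcal C_{\mathrm{cell}}(X)$ are the interval coalgebra of a finite poset with identical coproduct, and then assigns the residue and cellular-incidence interpretations; your counit check and explicit appeals to the earlier propositions only make this more complete. One small clarification to your aside: $\mathcal F(P)\cong\mathcal F(X)$ requires $X$ to be $P$ itself with its faces as closed cells (including the top cell $P$), since the boundary complex $\partial P$ alone has face poset $\mathcal F(P)\setminus\{P\}$.
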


\begin{proof}
Both coalgebras are defined on the vector space generated by intervals of a
finite poset.  The coproduct in each case is
\[
   \Delta[a,b]=\sum_{a\leq c\leq b}[a,c]\otimes[c,b].
\]
The algebraic structure is therefore identical.  The difference lies in the
interpretation of the underlying poset.  For $P$, the poset is the boundary
stratification of a positive geometry and the intervals are interpreted by
residue operations on canonical forms.  For $X$, the poset is the face poset
of cells and the intervals are interpreted by cellular incidence relations.
Thus the same interval-coalgebra skeleton supports two different geometric
interpretations.
\end{proof}

\subsection{A functorial formulation}

The comparison above can be expressed in a functorial language in the sense of
standard category theory \cite{MacLane1998}.  This is not only a matter of
notation: it clarifies exactly which part of the construction is common to
cellular topology and to positive geometry, and which part is additional
structure.

Let \(\mathbf{FinPos}_{\mathrm{int}}\) denote the category whose objects are
finite posets and whose morphisms are \emph{interval embeddings}.  Thus a
morphism \(\varphi:P\to Q\) is an injective order-preserving map with the
property that, for every interval \([a,b]\subset P\), the induced map
\[
   [a,b]\longrightarrow [\varphi(a),\varphi(b)]
\]
is an isomorphism of posets.  Equivalently, every element of \(Q\) lying
between \(\varphi(a)\) and \(\varphi(b)\) is the image of a unique element of
\([a,b]\).  This restriction is imposed because a general monotone map of
posets need not preserve interval coalgebras.

Define a functor
\[
   \mathcal I:
   \mathbf{FinPos}_{\mathrm{int}}
   \longrightarrow
   \mathbf{Coalg}_{\mathbb Q}
\]
by assigning to a finite poset \(P\) the interval coalgebra
\[
   \mathcal I(P)
   =
   \mathbb Q\langle [a,b]:a\leq b\text{ in }P\rangle,
\]
with coproduct
\[
   \Delta[a,b]=\sum_{a\leq c\leq b}[a,c]\otimes[c,b].
\]
If \(\varphi:P\to Q\) is an interval embedding, set
\[
   \mathcal I(\varphi)([a,b])=[\varphi(a),\varphi(b)].
\]

\begin{proposition}[Functoriality of the interval coalgebra]
\label{prop:interval-functor}
The assignment \(\mathcal I\) is a functor from
\(\mathbf{FinPos}_{\mathrm{int}}\) to coalgebras over \(\mathbb Q\).
\end{proposition}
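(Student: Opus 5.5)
The plan is to verify directly that $\mathcal I(\varphi)$ is a well-defined coalgebra morphism, and then that $\mathcal I$ respects identities and composition.

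First, I check that $\mathcal I(\varphi)$ is well defined. Since $\varphi$ is order preserving, $a\le b$ implies $\varphi(a)\le\varphi(b)$. Hence $[\varphi(a),\varphi(b)]$ is a basis element of $\mathcal I(Q)$, and prescribing $\mathcal I(\varphi)$ on the basis of intervals determines a unique $\mathbb Q$-linear map. Injectivity of $\varphi$ is not needed for this step, though it makes $\mathcal I(\varphi)$ injective on basis elements.

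The main step is compatibility with the coproduct, and this is where the interval-embedding hypothesis is used. I would compute both sides on a basis element:
\[
   \Delta_Q\bigl(\mathcal I(\varphi)[a,b]\bigr)=\sum_{\varphi(a)\le d\le\varphi(b)}[\varphi(a),d]\otimes[d,\varphi(b)],
\]
\[
   (\mathcal I(\varphi)\otimes\mathcal I(\varphi))\Delta_P[a,b]=\sum_{a\le c\le b}[\varphi(a),\varphi(c)]\otimes[\varphi(c),\varphi(b)].
\]
By hypothesis, the induced map $[a,b]\to[\varphi(a),\varphi(b)]$, $c\mapsto\varphi(c)$, is a bijection. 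So the reindexing $d=\varphi(c)$ matches the two index sets exactly, with no omitted or repeated terms, and the sums coincide. I expect this to be the only real obstacle. For a merely monotone map, the right-hand sum could contain elements $d$ outside the image, or collisions among the $\varphi(c)$, and then the identity fails. That is precisely why the category $\mathbf{FinPos}_{\mathrm{int}}$ was introduced.

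The counit is straightforward. If $a=b$ then $\varphi(a)=\varphi(b)$. If $a<b$ then $\varphi(a)\ne\varphi(b)$ by injectivity, so $\varphi(a)<\varphi(b)$. Hence $\varepsilon_Q\circ\mathcal I(\varphi)=\varepsilon_P$ on basis elements.

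Finally, I check functoriality. The identity map of $P$ is an interval embedding, and $\mathcal I(\mathrm{id}_P)$ fixes every $[a,b]$. If $\varphi:P\to Q$ and $\psi:Q\to R$ are interval embeddings, then $\psi\circ\varphi$ is injective and order preserving. Its induced map on $[a,b]$ is the composite of the two interval isomorphisms $[a,b]\to[\varphi(a),\varphi(b)]\to[\psi\varphi(a),\psi\varphi(b)]$, so it is again an isomorphism and $\mathbf{FinPos}_{\mathrm{int}}$ is closed under composition. On basis elements,
\[
   \mathcal I(\psi\circ\varphi)[a,b]=\mathcal I(\psi)\mathcal I(\varphi)[a,b].
\]
Coassociativity of each $\mathcal I(P)$ was already established in the earlier propositions, and the counit axiom $(\varepsilon\otimes\mathrm{id})\Delta=\mathrm{id}$ follows because only the terms with $c=a$ (respectively $c=b$) survive. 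So each $\mathcal I(P)$ is a genuine object of $\mathbf{Coalg}_{\mathbb Q}$, and $\mathcal I$ is a functor.
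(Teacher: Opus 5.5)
Your proposal is correct and follows essentially the same route as the paper: the key step in both is reindexing the coproduct sum through the bijection $[a,b]\to[\varphi(a),\varphi(b)]$, $c\mapsto\varphi(c)$, with injectivity of $\varphi$ giving compatibility with the counit. You also spell out details the paper dismisses as immediate, namely closure of $\mathbf{FinPos}_{\mathrm{int}}$ under composition, preservation of identities and composites, and the counit axiom in each $\mathcal I(P)$; this is a welcome but inessential elaboration.
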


\begin{proof}
It is immediate that identities and compositions are preserved.  It remains
only to check compatibility with the coproduct.  For an interval
\([a,b]\subset P\),
\[
   \Delta \mathcal I(\varphi)([a,b])
   =
   \Delta[\varphi(a),\varphi(b)]
   =
   \sum_{\varphi(a)\leq q\leq\varphi(b)}
   [\varphi(a),q]\otimes[q,\varphi(b)].
\]
Since \(\varphi\) is an interval embedding, each such \(q\) is uniquely of the
form \(q=\varphi(c)\), with \(a\leq c\leq b\).  Hence
\[
   \Delta \mathcal I(\varphi)([a,b])
   =
   \sum_{a\leq c\leq b}
   [\varphi(a),\varphi(c)]\otimes[\varphi(c),\varphi(b)]
   =
   (\mathcal I(\varphi)\otimes\mathcal I(\varphi))\Delta[a,b].
\]
The counit is preserved for the same reason: \(a=b\) if and only if
\(\varphi(a)=\varphi(b)\).  Thus \(\mathcal I(\varphi)\) is a coalgebra
morphism.
\end{proof}

Now let \(\mathbf{PosGeom}_{\mathrm{fin}}^{\mathrm{int}}\) be a category of
finite positive geometries with morphisms restricted to boundary-preserving
maps that induce interval embeddings on face posets.  Similarly, let
\(\mathbf{RegCW}_{\mathrm{fin}}^{\mathrm{int}}\) be the category of finite
regular CW complexes with cellular morphisms that induce interval embeddings
on face posets.  There are face-poset functors
\[
   \mathcal F_{+}:
   \mathbf{PosGeom}_{\mathrm{fin}}^{\mathrm{int}}
   \longrightarrow
   \mathbf{FinPos}_{\mathrm{int}},
   \qquad
   P\longmapsto\mathcal F(P),
\]
and
\[
   \mathcal F_{\mathrm{cell}}:
   \mathbf{RegCW}_{\mathrm{fin}}^{\mathrm{int}}
   \longrightarrow
   \mathbf{FinPos}_{\mathrm{int}},
   \qquad
   X\longmapsto\mathcal F(X).
\]
Composing with \(\mathcal I\) gives two coalgebra-valued functors:
\[
   \mathcal I\circ\mathcal F_{+}
   :
   \mathbf{PosGeom}_{\mathrm{fin}}^{\mathrm{int}}
   \longrightarrow
   \mathbf{Coalg}_{\mathbb Q},
\]
and
\[
   \mathcal I\circ\mathcal F_{\mathrm{cell}}
   :
   \mathbf{RegCW}_{\mathrm{fin}}^{\mathrm{int}}
   \longrightarrow
   \mathbf{Coalg}_{\mathbb Q}.
\]
Thus the two constructions literally factor through the same interval-coalgebra
functor:
\[
\begin{array}{ccccc}
\mathbf{PosGeom}_{\mathrm{fin}}^{\mathrm{int}}
& \xrightarrow{\ \mathcal F_{+}\ }
& \mathbf{FinPos}_{\mathrm{int}}
& \xrightarrow{\ \mathcal I\ }
& \mathbf{Coalg}_{\mathbb Q},
\end{array}
\]
and
\[
\begin{array}{ccccc}
\mathbf{RegCW}_{\mathrm{fin}}^{\mathrm{int}}
& \xrightarrow{\ \mathcal F_{\mathrm{cell}}\ }
& \mathbf{FinPos}_{\mathrm{int}}
& \xrightarrow{\ \mathcal I\ }
& \mathbf{Coalg}_{\mathbb Q}.
\end{array}
\]

This is the precise functorial content of the common lower row.  The shared
object is not a metric spacetime and not an amplitude; it is the interval
coalgebra of a finite incidence poset.  The two theories differ by the extra
data placed over this common functorial skeleton.  On the positive-geometric
side one adds canonical forms and residue maps.  On the cellular-spacetime
side one adds attaching maps, orientation numbers and, if the cellular complex
is to model physical spacetime, metric, causal or quantum-geometric labels.
The functorial viewpoint is therefore deliberately modest, but it isolates the
mathematical mechanism common to the two reconstructions.

\subsection{An incidence-first reconstruction principle}

The comparison suggests the following principle.  It is not a theorem of
quantum gravity, but a precise organising principle connecting the present
positive-geometric construction to the cellular viewpoint of algebraic
topology.

\begin{center}
\fbox{\begin{minipage}{0.88\textwidth}
\textbf{Incidence-first reconstruction principle.}
The primitive object common to cellular spacetime models and positive-geometric
amplitudes is an incidence structure.  In the spacetime case, geometric
realisation of this incidence structure reconstructs the underlying
polyhedron, once suitable regularity assumptions are imposed, while additional
metric or causal labels encode geometry.  In the amplitude case, the same
incidence structure, equipped with canonical forms, reconstructs the nested
factorisation structure of scattering amplitudes through residues.
\end{minipage}}
\end{center}

This principle gives a controlled sense in which spacetime-like structure and
amplitude factorisation may both be governed by combinatorial boundary data.
It also clarifies the limitation.  Incidence data alone reconstruct topology,
not physics.  To obtain a physical spacetime one adds metric, causal or
quantum-geometric labels.  To obtain an amplitude one adds the canonical form
and its residue prescription.  The point of the comparison is that both kinds
of reconstruction pass through the same coalgebraic skeleton.

There is one further technical caveat.  Not every topological manifold in all
dimensions is triangulable, and the triangulation problem has subtle high-
dimensional obstructions \cite{Manolescu2016}.  The statement made here is
therefore deliberately restricted to triangulated, PL, smooth or regular CW
models of the kind that occur in most physical applications.  Within that
scope, the analogy is precise: the face poset reconstructs the cellular
realisation, and the interval coalgebra records the nested incidence data.

This provides the promised bridge.  The boundary--residue incidence coalgebra
introduced for positive geometries is the amplitude analogue of the cellular
incidence coalgebra of a decomposed spacetime.  In algebraic topology,
cellular incidence organises the construction of space.  In positive geometry,
boundary incidence organises the construction of factorisation.  The same
combinatorial mechanism lies beneath both descriptions.

\section{Concluding Remarks}
\label{sec:conclusion}

The paper has introduced a boundary--residue incidence coalgebra associated
with the finite face poset of a positive geometry.  The construction is the
ordinary incidence coalgebra of intervals in a poset, but it acquires a
specific amplitude-theoretic interpretation when the poset is the boundary
stratification of a positive geometry and the residue recursion of the
canonical form is taken into account.

For the associahedron $K_n$, faces are indexed by admissible sets of
non-crossing diagonals of a convex $(n+1)$-gon.  We proved that the interval
coproduct from a face $F_D$ to the full associahedron is
\[
   \Delta [F_D,K_n]
   =
   \sum_{E\subseteq D}[F_D,F_E]\otimes[F_E,K_n],
\]
where the sum runs over all sub-dissections $E$ of $D$.  This formula records
all possible intermediate choices of planar factorisation channels.  The
corresponding residue of the canonical form factorises as an exterior product
of canonical forms on the lower associahedra associated with the subpolygons
obtained by cutting along $D$.

The worked example of $K_4$ makes the construction explicit in the first
non-trivial case.  The two middle terms in the coproduct from a vertex to the
pentagon record the two possible first residues leading to the corresponding
corner of the canonical form.  This is the tree-level positive-geometric
analogue of the way in which the Connes--Kreimer coproduct records nested
substructures of Feynman graphs, with residue factorisation replacing
renormalisation recursion.

The result should therefore be read first as a precise tree-level statement
about associahedral positive geometries.  It identifies the incidence
coproduct of face intervals as the combinatorial shadow of residue
factorisation of the canonical form.  The loop-level formulation shows that
this mechanism is not tied to the associahedron itself: when a planar loop
positive geometry is available, the same construction is applied to its
boundary poset.  For non-planar amplitudes, where the corresponding positive
geometry is not known in comparable generality, the logarithmic-singularity
incidence coalgebra isolates the algebraic structure that such a geometry
would have to realise.

The final cellular comparison places the construction in a broader
topological context.  The face poset of a triangulated or regular CW space
reconstructs its barycentric subdivision, while the corresponding interval
coalgebra records flags of cells.  In positive geometry the same interval
coalgebra records flags of boundary strata and hence iterated residues.  The
functorial formulation makes this commonality precise: both the positive-
geometric and cellular-spacetime constructions factor through the same
face-poset functor followed by the same interval-coalgebra functor.  They
differ in the structures placed over that common coalgebraic skeleton:
spacetime models require metric, causal or representation-theoretic labels,
while amplitudes require canonical forms and residue prescriptions.  This is
the level at which the comparison with both Connes--Kreimer and cellular
topology is made in the present paper: nested subdivergences are replaced by
nested boundary strata or singularity strata, and renormalisation recursion is
replaced by residue recursion.

\end{document}